\newcommand {\matr}[2]{\left[\begin{array}{#1}#2\end{array}\right]}
\newcommand{\tightMatr}[2]{\begin{bmatrix}#2\end{bmatrix}}
\definecolor{wheat}{rgb}{0.96,0.87,0.70}
\newcommand{\E}{\mathbb{E}}
\newcommand{\rr}{{\mathbb R}}
\newcommand{\matrice}[2]{\left[\hspace*{-.1cm}\begin{array}{#1} #2 \end{array}\hspace*{-.1cm}\right]}
\newtheorem{Theorem}{Theorem}
\newtheorem{Lemma}[Theorem]{Lemma}
\newtheorem{Proposition}[Theorem]{Proposition}
\newtheorem{Problem}[Theorem]{Problem}
\newtheorem{Example}{Example}
\newtheorem{Remark}{Remark}
\title{\LARGE \bf {Constrained Controller and Observer Design by Inverse Optimality}}
\author{Mario Zanon and Alberto Bemporad
\thanks{
M. Zanon and A. Bemporad are with IMT Lucca, Lucca, Italy. e-mail: \{name.surname@imtlucca.it\}.%
}%
}
\begin{document}

\maketitle


%
%
%
%
\begin{abstract}
	Model Predictive Control (MPC) is often tuned by trial and error.
When a baseline linear controller exists that is already well tuned in the absence of constraints and MPC is introduced to enforce them, one would like to 
avoid altering
        the original linear feedback law whenever they are not active. We formulate this problem as a controller matching similar to~\cite{DiCairano2009,DiCairano2010,Tran2015}, which we extend to a more general framework. We prove that a positive-definite stage cost matrix yielding this matching property can be computed for all stabilizing linear controllers. Additionally, we prove that the constrained estimation problem can also be solved similarly, by matching a linear observer with a Moving Horizon Estimator (MHE). Finally, we discuss various aspects of the practical implementation of the proposed technique in some examples.
\end{abstract}

\begin{IEEEkeywords}
	MPC, Controller Matching, LQR, MHE, Kalman Filter.
\end{IEEEkeywords}

\section{Introduction}\label{sec:intro}
	Model Predictive Control (MPC) provides a systematic approach to control systems
    subject to constraints, by relying on constrained optimization. Recent progress on solvers has made MPC applicable at unprecedented high rates, thus widely enlarging its range of potential applications. While constraint satisfaction is enforced automatically by the optimization procedure, closed-loop tracking performance is achieved by adequately tuning the cost function to be minimized.
	
	The main drawback of MPC is the lack of a systematic approach to tune its cost function. Some approaches have been developed to tune the controller for specific definitions of the control performance. In case a clear performance criterion can be formulated as a function of the states and controls (often referred to as economic MPC), an approach for computing a quadratic positive-definite stage cost was proposed in~\cite{Zanon2014d,Zanon2016b,Zanon2017a}. Moreover, (semi)-automatic tuning methods were proposed in~\cite{FPB20} and \cite{ZBP21} based on black-box global and preference-based optimization, respectively.
    In many cases, however, the standard procedure consists of adapting the MPC cost parameters by trial-and-error until closed-loop performance is satisfactory. 
	
	Since a plethora of tuning methods were developed for linear controllers, forcing MPC to match the feedback law of a well-designed linear controller whenever possible is of practical interest. Therefore, a controller matching procedure was proposed in~\cite{DiCairano2009,DiCairano2010,Tran2015}, with the objective of computing a stage cost for MPC that, whenever possible, delivers a feedback control coinciding with the one of a prescribed linear controller. The tuning procedure  consists of (a) tuning a linear controller using one of the many available methods, and (b) solving the controller matching problem to obtain a suitable cost function for MPC. 
	
	In~\cite{DiCairano2009,DiCairano2010}, a controller matching in state space was proposed, but the cost was restricted to have zero cross state-input terms, such that some controllers could only be matched approximately. In~\cite{Tran2015}, an input-output setting was considered and the norm of the difference between the MPC and the desired feedback matrix minimized. However, no guarantee that the feedback matrix can be recovered exactly was given. Additionally, it was left as an open question whether an indefinite cost can leave more freedom to match a wider range of controllers.
    
    In this paper we close a theoretical gap by proving that every stabilizing linear feedback controller can be matched exactly by a positive-definite stage cost in MPC. Additionally, we provide three different methods for solving the controller matching problem which are easy to implement. Our derivation is first done for models in state-space form and then extended to the input-output case. 
	Finally, we show that our developments also apply to state estimation, such that constraints on state estimates are handled by a moving horizon estimator that matches a prescribed linear observer.
	
	The paper is structured as follows. In Section~\ref{sec:formulation} we prove that 
    every stabilizing linear feedback controller can be matched exactly by a linear quadratic regulator (LQR). We propose three solution strategies based on solving a small-dimensional semidefinite programming (SDP) problem in Section~\ref{sec:solving_the_problem}. We comment on how to deploy our results for reference tracking both in case of state-space and input-output models in Section~\ref{sec:tracking_io}. In Section~\ref{sec:mpc} we prove that the controller-matching property proven for LQR holds for MPC as long as the constraints are not active. We briefly discuss the observer matching problem in Section~\ref{sec:observer}. Using four examples, we demonstrate the effectiveness of the matching procedure and discuss practical implementation aspects in Section~\ref{sec:simulations}. We finally draw conclusions in Section~\ref{sec:conclusions}.
	
\section{Problem Formulation}
\label{sec:formulation}
Consider the linear discrete-time system
\begin{equation}
	x_{+} = Ax + B u,
    \label{eq:system}
\end{equation}
where $x\in\rr^{n_x}$ is the state vector, $u\in\rr^{n_u}$ is the input vector, and
$x_+$ is the state at the next time. Assume that a \emph{linear} feedback law
\begin{equation}
    u= -\hat Kx
\label{eq:Kx}
\end{equation}
which asymptotically stabilizes~\eqref{eq:system} has been designed to yield the desired closed-loop performance. 
Our goal is to design a \emph{model predictive controller} that: (a) enforces the constraints
\begin{equation}
    Cx+Du+e\leq 0,
\label{eq:constraints}
\end{equation}
defined by matrices $C$, $D$ and vector $e$; and (b) delivers a feedback law which exactly coincides with the linear control law in~\eqref{eq:Kx}, when the constraints in the MPC optimization problem are not active.

In order to address such a goal, we first neglect constraints~\eqref{eq:constraints}
and focus on the LQR problem 
\begin{subequations}
	\begin{align}
	\min_u \ \ & \sum_{k=0}^\infty \ell( x_k, u_k ) \nonumber\\
	\mathrm{s.t.} \ \ & x_{k+1} = Ax_k + B u_k, && k = 0,1,\ldots,
    \label{eq:LQR}
	\end{align}
where the stage cost is
\begin{align}
	\label{eq:stage_cost}
	\hspace{-0.8em}\ell(x,u) = \matr{c}{x \\ u}^\top \matr{ll}{Q & S^\top \\ S & R} \matr{c}{x \\ u} = \matr{c}{x \\ u}^\top
 H \matr{c}{x \\ u},
\end{align}
\label{eq:LQR-pb}%
\end{subequations}
with 
$H=H^\top\in\rr^{(n_x+n_u)\times (n_x+n_u)}$. 
The solution of the LQR problem~\eqref{eq:LQR-pb}, if it exists, is the only stabilizing solution among all solutions of the Discrete Algebraic Riccati Equation (DARE)
\begin{subequations}
	\label{eq:dare}
	\begin{align}
		&P =  A^\top P A  +  Q - (S^\top + A^\top P B) K,\\
		&(R + B^\top P B)K = S + B^\top P A.
	\end{align}
\end{subequations}

\begin{Problem}[LQR controller matching]
	\label{prob:match}
	Given a linear model $(A,B)$ and an asymptotically stabilizing feedback matrix $\hat K$, design a positive-definite stage cost such that the corresponding LQR controller from~\eqref{eq:dare} is $K=\hat K$.
\end{Problem}

We focus on the discrete-time case, even though the same results also hold in continuous time, which is omitted for conciseness. 

For all positive-definite matrices $H$, the assumption that $(A,B)$ is stabilizable implies that the LQR feedback gain from~\eqref{eq:dare} is asymptotically stabilizing.
In case $H$ is not positive-definite, the additional asymptotic convergence constraint 
\begin{equation}
    \lim_{k\to\infty} x_k=0
\label{eq:as_convergence}
\end{equation}
is often necessary to guarantee that the solution is asymptotically stabilizing, as shown in the following example:
\begin{Example}[Indefinite LQR and DARE]
	Consider the scalar system $x_{k+1}=2x_k + u_k$ and stage cost $\ell(x_k,u_k)=u^2$. The 
    corresponding DARE is
	$P=4P-\frac{4P^2}{1+P}$,
	with solutions $P\in\{0,3\}$, $K\in\{0,1.5\}$. The first one is destabilizing and corresponds to the formulation without constraint~\eqref{eq:as_convergence}, the second one is stabilizing and corresponds to the constrained formulation.
\end{Example}

For more details on indefinite LQR formulations we refer the interested reader to~\cite{Willems1971,Zanon2014d,Molinari1975a,Gruene2018}.

\section{Solution to the Inverse LQR Problem}
\label{sec:solving_the_problem}

In order to discuss Problem~\ref{prob:match} we first establish some preliminary results. 
Let $A_{\hat K} := A - B \hat K$ and note that, for any matrix $\bar Q \succ 0$, asymptotic stability of $A_{\hat K}$ implies that the Lyapunov equation 
\begin{align}
	\label{eq:lyap}
	\bar Q + A_{\hat K}^\top \bar P A_{\hat K} - \bar P=0
\end{align}
is solved by some matrix $\bar P\succ 0$.

\begin{Lemma}
	\label{lem:zero_feedback}
	Consider the linear discrete-time system 
	\begin{align*}
	x_{+} = A_{\hat K}x + B u,
	\end{align*}
	with $A_{\hat K}$ asymptotically stable. Let $\bar P$ be the solution to the Lyapunov equation~\eqref{eq:lyap} for $\bar Q=\bar Q^\top\succ 0$, and select cost matrices $\bar Q$, $\bar S:=-B^\top \bar P A_{\hat K}$ and any $\bar R\succ0$. Then, the LQR feedback is $K=0$.
\end{Lemma}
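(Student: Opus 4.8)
The plan is to verify directly that the pair $(\bar P,\bar K)=(\bar P,0)$ solves the DARE~\eqref{eq:dare} associated with the system matrices $(A_{\hat K},B)$ and the stage-cost blocks $\bar Q$, $\bar S$, $\bar R$, and then to invoke the characterization recalled before Problem~\ref{prob:match}, namely that the LQR gain is the one associated with the \emph{stabilizing} solution of the DARE.

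First I would substitute into the second DARE equation, with the obvious replacements $A\to A_{\hat K}$, $S\to\bar S$, $R\to\bar R$, $P\to\bar P$. Its right-hand side then reads $\bar S + B^\top\bar P A_{\hat K}$, which vanishes by the very definition $\bar S:=-B^\top\bar P A_{\hat K}$. Since $\bar R\succ 0$ and $B^\top\bar P B\succeq 0$, the matrix $\bar R+B^\top\bar P B$ is invertible, and therefore the equation forces $\bar K=0$.

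Next I would plug $\bar K=0$ into the first DARE equation, which collapses to $\bar P = A_{\hat K}^\top\bar P A_{\hat K}+\bar Q$, i.e.\ exactly the Lyapunov equation~\eqref{eq:lyap} that $\bar P$ solves by construction. Hence $(\bar P,0)$ is a solution of the DARE, and the associated closed-loop matrix is $A_{\hat K}-B\bar K=A_{\hat K}$, which is asymptotically stable by hypothesis. Thus $(\bar P,0)$ is the (unique) stabilizing solution, and the corresponding LQR feedback is $\bar K=0$, as claimed.

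The only point requiring care is the appeal to uniqueness of the stabilizing DARE solution: since $A_{\hat K}$ is already Schur, $(A_{\hat K},B)$ is trivially stabilizable and this is standard. It is worth stressing, however, that the stage-cost matrix $\bar H=\smallmat{\bar Q & \bar S^\top\\ \bar S & \bar R}$ produced here need not be positive definite, so we are genuinely relying on the \emph{stabilizing-solution} characterization rather than on a positive-definiteness/optimality argument; this is exactly what the DARE statement in Section~\ref{sec:formulation} provides, and it is all that is needed for the subsequent constructions.
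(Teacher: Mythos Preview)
Your proof is correct and follows essentially the same approach as the paper: both verify that $(\bar P,0)$ satisfies the DARE for $(A_{\hat K},B)$ with cost $(\bar Q,\bar S,\bar R)$, observe that the resulting closed loop $A_{\hat K}$ is Schur, and conclude by uniqueness of the stabilizing DARE solution. Your version simply spells out the substitution into the two DARE equations more explicitly and adds the useful remark that $\bar H$ need not be positive definite, but the argument is the same.
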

\begin{proof}
	We begin by noting that $P=\bar P$, with $\bar P$ solving~\eqref{eq:lyap}, and $K=0$ solve the DARE~\eqref{eq:dare} associated with system $(A_{\hat K},B)$ and cost matrices $\bar Q$, $\bar S$,
for any $\bar R\succ0$. Since $A_{\hat K}$ has all eigenvalues inside the unit circle, $K=0$ stabilizes $(A_{\hat K},B)$. Then $P=\bar P$, $K=0$ is a stabilizing solution of the LQR. Since the stabilizing solution, when it exists, is unique~\cite{Molinari1975a,Zanon2014d} this concludes the proof.
\end{proof}
\begin{Lemma}[{\cite[Lemma~1]{Zanon2014d}}]
	\label{lem:discr_lqr_equivalence}
	Consider system $(A_{\hat K},B)$ with $A_{\hat K}$ asymptotically stable, cost matrices $\bar Q$, $\bar R$, $\bar S$ from Lemma~\ref{lem:zero_feedback}, and corresponding LQR feedback $\bar K=0$; and consider system $(A,B)$ with cost matrices $Q$, $R$, $S$ and corresponding LQR feedback $K$. Assume that
	\begin{align}
	Q &= \bar Q + \bar S^\top \hat K + \hat K^\top \bar S + \hat K^\top \bar R \hat K, \quad
	S = \bar S + R \hat K,\quad R = \bar R.
    \label{eq:QSR}
	\end{align}
	Then, starting from the same initial state, the two systems generate the same trajectories
    in closed-loop with the corresponding LQR law, where for system $(A,B)$ the LQR law is $K=\hat K$.
\end{Lemma}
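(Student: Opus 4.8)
The plan is to relate the two LQR problems through the static state-feedback coordinate change $u = v - \hat K x$. Substituting $u_k = v_k - \hat K x_k$ into $x_{k+1} = Ax_k + Bu_k$ yields $x_{k+1} = A_{\hat K}x_k + Bv_k$, and for any fixed $x_0$ this is a bijection between input sequences $(u_k)$ and $(v_k)$ that leaves the state trajectory unchanged. Next I would expand the stage cost in the new variables, $\ell(x, v-\hat K x) = \smallmat{x \\ v-\hat K x}^\top H \smallmat{x \\ v-\hat K x}$, and collect the resulting quadratic form in $(x,v)$: its blocks are $Q - \hat K^\top S - S^\top\hat K + \hat K^\top R\hat K$ (the $x$--$x$ block), $S - R\hat K$ (the $v$--$x$ block), and $R$ (the $v$--$v$ block). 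Plugging in the hypotheses~\eqref{eq:QSR} — in particular $S = \bar S + R\hat K$ and $R=\bar R$ — these three blocks collapse to exactly $\bar Q$, $\bar S$ and $\bar R$. Hence, after the change of variables, the LQR problem for $(A,B)$ with cost $(Q,S,R)$ is term-by-term identical to the LQR problem for $(A_{\hat K},B)$ with cost $(\bar Q,\bar S,\bar R)$ of Lemma~\ref{lem:zero_feedback}; since $A_{\hat K}$ is asymptotically stable, taking $v\equiv 0$ already gives a convergent, finite-cost trajectory, so the problem is well posed. The two problems therefore share the same optimal value and the same optimal state trajectory, with optimal inputs related by $u_k^\star = v_k^\star - \hat K x_k^\star$; since Lemma~\ref{lem:zero_feedback} gives $v_k^\star = -\bar K x_k^\star = 0$, the LQR law for $(A,B)$ is $u_k^\star = -\hat K x_k^\star$, i.e.\ $K=\hat K$, and the closed-loop recursion $x_{k+1} = A_{\hat K}x_k$ is then literally the same on both sides.

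To make the argument fully rigorous in the presence of a possibly indefinite cost, I would instead (or in addition) verify directly that $(P,K) = (\bar P, \hat K)$ solves the DARE~\eqref{eq:dare} associated with $(A,B)$ and $(Q,S,R)$. From the second equation of the DARE in Lemma~\ref{lem:zero_feedback}, which reduces to $B^\top \bar P A_{\hat K} = -\bar S$ because $\bar K=0$, and from $A - B\hat K = A_{\hat K}$, equation~\eqref{eq:dare}(b) becomes an identity once one substitutes $S = \bar S + R\hat K$ and $R=\bar R$; expanding $A_{\hat K}^\top \bar P A_{\hat K}$ and using the Lyapunov equation~\eqref{eq:lyap}, equation~\eqref{eq:dare}(a) reduces precisely to the first relation in~\eqref{eq:QSR}. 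Since $A - B\hat K = A_{\hat K}$ is asymptotically stable by assumption, $(\bar P, \hat K)$ is a \emph{stabilizing} solution of the DARE, and by uniqueness of the stabilizing solution~\cite{Molinari1975a,Zanon2014d} it is the LQR solution, whence $K=\hat K$; the closed-loop trajectory then coincides with that of $(A_{\hat K},B)$ under $\bar K=0$.

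I do not expect a genuine obstacle here: the core is an algebraic identity verification (substitute~\eqref{eq:QSR} into the transformed cost blocks, or into the DARE). The only two points requiring care are (i) confirming that $u = v - \hat K x$ is a bijection of admissible trajectories for fixed $x_0$, so that the two optimal control problems are truly equivalent and the optimizers correspond as claimed, and (ii) invoking uniqueness of the stabilizing DARE solution — which is exactly what lets us conclude $K=\hat K$ rather than merely ``$(\bar P,\hat K)$ is one of several DARE solutions'', and which matters precisely because $H$ (hence $\bar Q,\bar S,\bar R$) may be indefinite, as the scalar Example illustrates.
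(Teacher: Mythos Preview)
Your proposal is correct. The paper's own proof is a single line: ``the DAREs associated with the two LQR formulations coincide,'' which is precisely what your second paragraph carries out in detail (substituting~\eqref{eq:QSR} and $A=A_{\hat K}+B\hat K$ into~\eqref{eq:dare} to recover the barred DARE, then invoking uniqueness of the stabilizing solution).

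Your first paragraph --- the change of input variables $u=v-\hat K x$ --- is a genuinely different and somewhat more transparent route. It shows, without any Riccati algebra, that the two infinite-horizon LQR \emph{problems} are identical objects after a bijective reparametrization of the inputs, so their optimizers must correspond via $u_k^\star=v_k^\star-\hat K x_k^\star$. This buys you a direct ``same trajectories, same cost'' conclusion and makes the role of~\eqref{eq:QSR} visible as the congruence $H = \smallmat{I & 0\\-\hat K & I}^\top \bar H \smallmat{I & 0\\-\hat K & I}$. The paper's DARE route, by contrast, is terser and immediately yields the identification of the cost-to-go matrices ($P=\bar P$), which is what the subsequent Theorem~\ref{thm:existence} uses. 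Your caveat~(ii) is well placed: without uniqueness of the stabilizing DARE solution the variable-change argument alone would only show that $-\hat K x$ is \emph{an} optimal feedback, not \emph{the} LQR feedback, so both approaches ultimately lean on the same uniqueness fact.
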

\begin{proof}
	The proof given in~\cite{Zanon2014d} is obtained by noting that the DAREs associated with the two LQR formulations coincide.
\end{proof}

We are now ready to prove the following theorem.
\begin{Theorem}
	\label{thm:existence}
	Given a linear discrete-time stabilizable system $(A,B)$ and any asymptotically stabilizing feedback $\hat K$, there exists a quadratic positive-definite stage cost $\ell(x,u)$ as in~\eqref{eq:stage_cost} such that the corresponding LQR solution~\eqref{eq:dare} is $K=\hat K$.
\end{Theorem}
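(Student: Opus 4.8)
The plan is to chain the two lemmas already established and then spend the real effort on positive definiteness. First I would fix any $\bar Q = \bar Q^\top \succ 0$ (say $\bar Q = I$); asymptotic stability of $A_{\hat K}$ guarantees a unique $\bar P \succ 0$ solving the Lyapunov equation~\eqref{eq:lyap}. Setting $\bar S := -B^\top \bar P A_{\hat K}$ and choosing some $\bar R \succ 0$, Lemma~\ref{lem:zero_feedback} gives LQR feedback $\bar K = 0$ for $(A_{\hat K},B)$; defining $Q$, $S$, $R$ through~\eqref{eq:QSR}, Lemma~\ref{lem:discr_lqr_equivalence} then yields LQR feedback $K = \hat K$ for $(A,B)$ with stage cost matrix $H = \begin{bmatrix} Q & S^\top \\ S & R\end{bmatrix}$. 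So the entire content of the theorem reduces to showing that $\bar R$ can be picked so that $H \succ 0$.

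The key observation is that relations~\eqref{eq:QSR} are exactly a congruence transformation: with the invertible matrix $T = \begin{bmatrix} I & 0 \\ \hat K & I\end{bmatrix}$ a direct computation gives
\begin{align*}
H = \begin{bmatrix} Q & S^\top \\ S & R \end{bmatrix} = T^\top \begin{bmatrix} \bar Q & \bar S^\top \\ \bar S & \bar R \end{bmatrix} T =: T^\top \bar H\, T ,
\end{align*}
so $H \succ 0$ if and only if $\bar H \succ 0$. By the Schur complement, $\bar H \succ 0$ is equivalent to $\bar R \succ 0$ together with $\bar Q - \bar S^\top \bar R^{-1} \bar S \succ 0$. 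Since $\bar Q$, and hence $\bar P$ and $\bar S$, were fixed before $\bar R$ was chosen (the Lyapunov equation~\eqref{eq:lyap} does not involve $\bar R$), taking $\bar R = \rho I$ with $\rho > 0$ large enough makes $\bar S^\top \bar R^{-1}\bar S = \rho^{-1}\bar S^\top \bar S$ arbitrarily small, so $\bar Q - \bar S^\top \bar R^{-1}\bar S \succ 0$ and therefore $H \succ 0$, which completes the argument.

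I expect the only nontrivial point to be spotting the congruence structure of~\eqref{eq:QSR}, which collapses positive definiteness of the full $2\times 2$ block matrix $H$ to the scalarizable Schur-complement inequality $\bar Q \succ \bar S^\top \bar R^{-1}\bar S$; everything else is bookkeeping carried out in Lemmas~\ref{lem:zero_feedback}--\ref{lem:discr_lqr_equivalence}. It is worth stating explicitly that $\bar P$ is independent of $\bar R$, since that is what legitimizes the ``enlarge $\bar R$'' step (equivalently one could shrink $\bar Q$). This construction also makes Remark~\ref{rem:inv_prob} transparent: any $\bar Q \succ 0$ and any sufficiently large $\bar R$ produce a valid matching cost, so the inverse-optimality problem has infinitely many solutions.
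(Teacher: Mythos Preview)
Your proposal is correct and follows essentially the same route as the paper: invoke Lemmas~\ref{lem:zero_feedback} and~\ref{lem:discr_lqr_equivalence}, then make $H\succ 0$ by choosing $\bar R$ large enough (the paper writes this as $\bar R \succ \bar S\bar Q^{-1}\bar S^\top$) so that $\bar H\succ 0$. Your congruence observation $H=T^\top \bar H\,T$ is a slightly cleaner packaging of what the paper obtains by directly expanding the Schur complement and verifying $Q-S^\top R^{-1}S=\bar Q-\bar S^\top\bar R^{-1}\bar S$, but the mathematical content is identical.
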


\begin{proof}
	The proof is based on using first Lemma~\ref{lem:zero_feedback} to construct a positive-definite LQR formulation for system $(A_{\hat K},B)$, with $A_{\hat K}= A - B \hat K$, and then prove that this implies the existence of a positive-definite LQR formulation also for system $(A,B)$.
	
	Select any matrix $\bar Q=\bar Q^\top \succ 0$, compute $\bar P$ by solving the Lyapunov equation~\eqref{eq:lyap}, and define $\bar S:=-B^\top \bar P A_{\hat K}$. 
	By selecting any symmetric matrix $\bar R$ such that $\bar R \succ \bar S \bar Q^{-1} \bar S^\top\succeq 0$, we get
	\begin{align*}
		\bar H := \matr{ll}{\bar Q & \bar S^\top \\ \bar S & \bar R } \succ 0.
	\end{align*}
	By Lemma~\ref{lem:zero_feedback}, this yields a positive-definite LQR formulation with zero feedback for system $(A_{\hat K},B)$, so that no control action is applied to system $(A_{\hat K},B)$.
	
	By applying Lemma~\ref{lem:discr_lqr_equivalence}, we obtain an equivalent LQR for system $(A,B)$ by defining the cost matrices $Q,R,S$ as in~\eqref{eq:QSR}.

We are left with proving that $H\succ 0$, or, equivalently, that $Q - S^\top R^{-1}S \succ 0$, since $R\succ0$. Because
	\begin{align*}
		S^\top R^{-1}S = \bar S^\top \bar R^{-1} \bar S + \bar S^\top \hat K + \hat K^\top \bar S + \hat K^\top \bar R \hat K,
	\end{align*}
we obtain
		$Q - S^\top R^{-1}S = \bar Q - \bar S^\top \bar R^{-1} \bar S \succ 0,$
	where positive-definiteness of the second term follows from $\bar H \succ 0$.
\end{proof}

By taking a different point of view, we provide next an alternative proof of Theorem~\ref{thm:existence}.

\begin{proof}[Alternative proof of Theorem~\ref{thm:existence}]
	For any $\Gamma \succ 0$ the cost%
	\begin{subequations}
		\label{eq:indef_cost}
		\begin{align}
			\ell(x,u) &= (u+\hat K x)^\top \Gamma(u+\hat K x) \\
			&= \matr{c}{x \\ u}^\top \matr{cc}{\hat K^\top \Gamma \hat K & \hat K^\top \Gamma \\ \Gamma \hat K & \Gamma} \matr{c}{x \\ u}
		\end{align}
	\end{subequations}
	solves the DARE~\eqref{eq:dare} with $K = \hat K$ and $P=0$. This fact is used in~\cite{Chisci2001} in the context of tube-based robust MPC. Since the proposed cost is indefinite, we exploit the results of~\cite{Zanon2014d,Zanon2016b} which state that any LQR with indefinite cost and stabilizing feedback matrix can be reformulated as an LQR with positive definite cost. 
\end{proof}
Since this theorem proves that any stabilizing feedback matrix $\hat K$ can be obtained as the solution of an LQR with positive-definite stage cost, there is no advantage in using an indefinite stage cost. Furthermore, 
we establish next a counterintuitive result about the solution of the controller matching problem for destabilizing feedback.
\begin{Lemma}
	Given any feedback $\hat K$ there exists a solution to the DARE formulated using cost~\eqref{eq:indef_cost} which yields $\hat K$ as feedback. Moreover, this entails that a stabilizing LQR solution exists, though $K=\hat K$ only holds if $\hat K$ is stabilizing.
\end{Lemma}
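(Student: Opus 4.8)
The first claim is a direct verification, so I would start there. Reading the stage-cost matrices off~\eqref{eq:indef_cost}, namely $Q=\hat K^\top\Gamma\hat K$, $S=\Gamma\hat K$, $R=\Gamma$, I substitute the candidate $P=0$, $K=\hat K$ into the DARE~\eqref{eq:dare}: the second equation collapses to $\Gamma K=\Gamma\hat K$, which gives $K=\hat K$ because $\Gamma\succ0$, and the first collapses to $0=\hat K^\top\Gamma\hat K-\hat K^\top\Gamma\hat K$. Hence $(P,K)=(0,\hat K)$ solves the DARE whatever the spectrum of $\hat K$ is; this is precisely the counterintuitive point, since no stabilizing property of $\hat K$ is used anywhere.

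For the ``moreover'' part the plan is to recast the LQR behind~\eqref{eq:indef_cost} in coordinates where the standard theory applies. Setting $v:=u+\hat Kx$ turns the dynamics into $x_+=A_{\hat K}x+Bv$ and the stage cost into $v^\top\Gamma v$, i.e. an LQR for the pair $(A_{\hat K},B)$ with $R=\Gamma\succ0$ and vanishing state and cross terms (equivalently, $Q-SR^{-1}S^\top=0$ on the original data). State feedback does not alter the controllable subspace, so stabilizability of $(A,B)$ transfers to $(A_{\hat K},B)$; consequently the value of this LQR under the convergence constraint~\eqref{eq:as_convergence} is finite and quadratic, $x^\top P^\star x$ with $P^\star\succeq0$, it solves the corresponding DARE, and the feedback $K^\star$ it induces makes the closed loop $A-B(K^\star+\hat K)=A_{\hat K}-BK^\star$ stable. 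This $P^\star$ is the stabilizing solution whose existence is asserted.

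It then remains to identify when $K=\hat K$ is that solution. If $\hat K$ is stabilizing, $A_{\hat K}$ is Schur, the control $v\equiv0$ is already optimal for the shifted problem, so $P^\star=0$; moreover, with vanishing cost on the state the shifted pair is detectable exactly because $A_{\hat K}$ is Schur, hence the stabilizing solution is unique and equals $(0,\hat K)$. If instead $\hat K$ is destabilizing, the solution $(0,\hat K)$ produces the closed loop $A_{\hat K}$, which is not Schur, so it cannot be the stabilizing solution, and $K=\hat K$ fails to be the stabilizing feedback. I expect the main obstacle to be precisely this degeneracy $Q-SR^{-1}S^\top=0$: the textbook DARE existence/uniqueness results rely on a detectability hypothesis that is void here, so the middle paragraph must be backed by the indefinite/degenerate-cost LQR theory (e.g.~\cite{Willems1971,Molinari1975a,Zanon2014d}), and, to be fully rigorous, one should flag the knife-edge case in which $\hat K$ places eigenvalues exactly on the unit circle, where the ``stabilizing'' solution is only marginally so (spectrum in the closed unit disk).
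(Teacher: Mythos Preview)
Your verification that $(P,K)=(0,\hat K)$ solves the DARE is exactly the paper's argument; the paper also just substitutes the cost matrices from~\eqref{eq:indef_cost} into the DARE and observes that $P=0$, $K=\hat K$ is a (possibly non-stabilizing) solution.

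For the ``moreover'' part, the paper is much terser than you: it simply states that existence of a stabilizing LQR solution follows directly from the fact that the DARE admits \emph{some} solution together with~\cite[Proposition~2]{Zanon2014d}. Your route via the input shift $v:=u+\hat Kx$, which reduces everything to a pure-input-cost LQR for the stabilizable pair $(A_{\hat K},B)$, is correct and more transparent, but---as you rightly flag---it still bottoms out in the same indefinite/degenerate-cost LQR results (since $Q-SR^{-1}S^\top=0$ kills the detectability hypothesis of the textbook DARE theory). So the two approaches are not really different in substance: yours unpacks what the cited proposition is doing, at the price of a longer argument. Your third paragraph, distinguishing the stabilizing from the destabilizing $\hat K$ case and noting the knife-edge unit-circle situation, goes beyond what the paper's proof spells out explicitly; this is a nice addition rather than a deviation.
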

\begin{proof}
	The DARE reads
	\begin{align*}
		P &=  A^\top P A  +  \hat K^\top \Gamma \hat K - (\hat K^\top \Gamma + A^\top P B) \\
		&\hspace{10em}\cdot(\Gamma + B^\top P B)^{-1}(\Gamma \hat K + B^\top P A),
	\end{align*}
	such that $P=0$, $K=\hat K$ is a solution, though not necessarily a stabilizing one, of the DARE. The existence of a stabilizing solution of the indefinite LQR is then a direct consequence of this fact and~\cite[Proposition~2]{Zanon2014d}.
\end{proof}
This lemma warns the control engineer that the controller matching procedure might succeed at finding a positive-definite LQR formulation also in case of a non-stabilizing feedback $\hat K$; however, the LQR feedback is stabilizing, such that $K\neq \hat K$. We provide next a simple example to demonstrate this fact. 

\begin{Example}[Destabilizing Controller Matching]
	Consider the system $A = 0.9$, $B = 0.1$, with destabilizing feedback $\hat K = -2$.
	The indefinite LQR formulation using cost~\eqref{eq:indef_cost} with $\Gamma=1$, i.e., $Q = \hat K^\top \hat K$, $R=1$, $S=\hat K^\top$ yields the DARE
	\begin{align*}
	P = 0.81 P + 4 - (0.09P-2)^2(0.01P+1)^{-1},
	\end{align*}
	which simplifies to
	$-P\frac{P - 21}{P + 100} = 0$.
	This equation has two solutions: the stabilizing one corresponds to $P=21$ and $K=-0.0909$; and the destabilizing one  corresponds to $P=0$ and $K=-2$.
\end{Example}

We conclude this section by proving that in general there exist infinitely many LQR formulations yielding $K=\hat K$.
\begin{Proposition}
	\label{prop:inf_many_sol}
	Given a stabilizing feedback gain $\hat K$, there exist infinitely many LQR formulations yielding $K=\hat K$.
\end{Proposition}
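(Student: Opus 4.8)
The plan is to exhibit an explicit infinite family of positive-definite stage costs that all produce the same feedback $K=\hat K$. The cleanest route is to re-run the construction in the proof of Theorem~\ref{thm:existence}, but keep track of the free parameters: the matrix $\bar Q=\bar Q^\top\succ 0$ used in the Lyapunov equation~\eqref{eq:lyap}, and the matrix $\bar R=\bar R^\top$ chosen to satisfy $\bar R\succ\bar S\bar Q^{-1}\bar S^\top$. Both choices live in open (hence infinite, in fact uncountable) sets, so the construction already yields infinitely many positive-definite $\bar H$, and via the bijection~\eqref{eq:QSR} infinitely many positive-definite $H$ with the desired LQR feedback.

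First I would recall the observation already made in Remark~\ref{rem:inv_prob}: scaling $H\mapsto\sigma H$ for $\sigma>0$ leaves the LQR optimizer, and thus $K$, unchanged. This alone gives a one-parameter family, but it is the trivial one, so I would strengthen the statement by showing the family is not merely a scaling orbit. To do this I would fix $\bar R$ (say $\bar R = \bar S\bar Q^{-1}\bar S^\top + I$ in the construction) and vary $\bar Q$ over the cone of positive-definite matrices: each $\bar Q$ gives a distinct $\bar P$ via~\eqref{eq:lyap}, hence a distinct $\bar S=-B^\top\bar P A_{\hat K}$, hence — through~\eqref{eq:QSR} — a genuinely different triple $(Q,S,R)$ that is not a positive multiple of any other in the family (for instance because the $P$-matrices solving the resulting DARE differ, or simply by a dimension-counting argument: the map from $\bar Q$ to $H$ is injective). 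By Theorem~\ref{thm:existence} and Lemma~\ref{lem:discr_lqr_equivalence}, every member of this family is positive-definite and yields $K=\hat K$.

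An even shorter alternative I would mention is to start from the indefinite cost~\eqref{eq:indef_cost}, which for every $\Gamma\succ0$ solves the DARE with $K=\hat K$ and $P=0$; varying $\Gamma$ over positive-definite matrices and then invoking the convexification result of~\cite{Zanon2014d} to turn each into a positive-definite formulation gives infinitely many solutions directly. The main (and only) obstacle is making the word "infinitely many" precise in a way that excludes the trivial scaling family — i.e., arguing that the constructed costs are pairwise non-proportional; I expect this to be a one-line remark (e.g. two distinct $\bar Q$'s yield $\bar P$'s that are not scalar multiples of each other whenever $\dim\succeq 1$, or the associated cost-to-go matrices $P$ differ), rather than a real difficulty.
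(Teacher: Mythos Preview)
Your argument is correct: re-running the constructive proof of Theorem~\ref{thm:existence} while varying the free choices $\bar Q\succ0$ (and, if desired, $\bar R$) does produce an infinite family of positive-definite stage costs with $K=\hat K$, and your remark that the resulting $H$'s are not all proportional is easy to justify (for instance, taking $\bar Q=tI$ makes the resulting $R$ affine, not linear, in $t$). The alternative via~\eqref{eq:indef_cost} with varying $\Gamma$ works as well.

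The paper takes a different and shorter route: rather than re-exploiting the existence construction, it invokes an \emph{invariance} result (from~\cite{Gros2018}) stating that the LQR gain is preserved under the explicit two-parameter family of cost transformations
\[
H \ \leftarrow\ H+\tightMatr{cc}{K^\top P_1 K & K^\top P_1 \\ P_1 K & P_1} + \tightMatr{cc}{ A^\top P_2 A - P_2 & A^\top P_2 B \\ B^\top P_2 A & B^\top P_2 B },
\]
subject to $P_1 +B^\top P_2 B + R + B^\top P B \succ 0$, together with positive scaling $H\leftarrow\sigma H$. The trade-off is this: your approach is fully self-contained within the paper and needs no external reference, but it only \emph{exhibits} infinitely many solutions; the paper's approach imports one citation but, in exchange, gives structural information---an explicit parametrization of a large invariance class acting on any fixed solution---which is a stronger and more reusable statement than mere existence of infinitely many. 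Also note that the paper does not bother to exclude the scaling orbit from ``infinitely many,'' so your care on that point, while mathematically sensible, goes beyond what the statement requires.
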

\begin{proof}
	The LQR gain is invariant under the transformation~\cite{Gros2020}
	\begin{equation*}
		H \leftarrow H+\tightMatr{cc}{\,K^\top P_1 K & K^\top P_1 \, \\ P_1 K & P_1} + \tightMatr{cc}{ \, A^\top P_2 A - P_2 & A^\top P_2 B \, \, \\ B^\top P_2 A & B^\top P_2 B \, \,  },
	\end{equation*}
	for any $P_1$, $P_2$, provided that the following holds:
	\begin{align*}
		P_1 +B^\top P_2 B + R +  B^\top P B \succ 0.
	\end{align*}
In addition, the LQR gain is invariant under positive scaling, i.e., $H\leftarrow\sigma H$, for any $\sigma>0$.
\end{proof}

\vspace{0.5em}
\subsubsection*{Numerical Methods for the Inverse LQR Problem}
We propose two formulations based on semidefinite programming (SDP)~\cite{Boyd2004}: (i) a direct formulation which does not require any other information than $\hat K$; (ii) an indirect formulation based on cost~\eqref{eq:indef_cost}, where one needs to provide a tuning matrix $\Gamma$. We stress that the matching problem must be solved only once offline. Since there exist infinitely many cost matrices yielding an exact match, we minimize the condition number (the ratio between the maximum and minimum eigenvalue) of matrix $H$ in order to avoid numerical inaccuracies when later solving the MPC problem on line.
\paragraph*{Direct formulation}
Given the desired gain $\hat K$, solve 
\begin{subequations}
	\label{eq:sdp_tuning}
	\begin{align}
		\min_{Q,S,R,P,\beta} \ \ & \beta \\
		\mathrm{s.t.} \ \ & \beta I \succeq 
	\matr{ll}{Q&S^\top\\S&R} 
	\succeq I, \qquad  \beta I \succeq P \succeq I \\
		& P = A^\top P A + Q - (S^\top + A^\top P B) \hat K ,\\
		& (R + B^\top P B)\hat K=  S + B^\top P A.
	\end{align}
\end{subequations}
Let $H_\star= \left [\begin{smallmatrix}
	Q_\star&S_\star^\top\\S_\star&{}R_\star^{\phantom{\top}}
\end{smallmatrix}\right ]
$, $P_\star$, $\beta_\star$ be an optimal solution of~\eqref{eq:sdp_tuning}. 
The condition number $\kappa_\star$ of the stage cost matrix $H_\star$ clearly satisfies $\kappa_\star\leq \beta_\star$, which is the reason for minimizing $\beta$ in~\eqref{eq:sdp_tuning}. The lower bound $H\succeq I$ in~\eqref{eq:sdp_tuning} does not cause any loss of generality: since $\beta$ is not upper-bounded and scaling $H$ to $\sigma H$ does not change the optimizer for all $\sigma>0$, any $H\succ 0$ can be rescaled with $\sigma^{-1}=\lambda_{\rm min}(H)$ (the minimum eigenvalue of $H$), so that $\sigma H\succeq I$. The same reasoning holds for $P$. 

\paragraph*{Indirect formulation} 
For any given 
matrix $\Gamma=\Gamma^\top\succ0$, solve%
\begin{subequations}%
	\label{eq:sdp_tuning_gamma}%
	\begin{align}%
	\min_{P,\alpha,\beta} \ \ & \beta \\
	\mathrm{s.t.} \ \ & \beta I \succeq \alpha H_\Gamma + H_P \succeq I, \qquad  \beta I \succeq P \succeq I,
	\end{align}
\end{subequations}
where in~\eqref{eq:sdp_tuning_gamma} we have set
\begin{align*}
H_\Gamma &:= \tightMatr{cc}{\hat K^\top \Gamma \hat K & \hat K^\top \Gamma \\ \Gamma \hat K & \Gamma}\hspace{-1pt}, \ \
H_P := -\tightMatr{cc}{ A^\top P A - P & A^\top P B \\ B^\top P A & B^\top P B }\hspace{-1pt}.
\end{align*}
Here, we exploited the fact that, by Proposition~\ref{prop:inf_many_sol}, $H_P$ does not change the LQR solution and stage cost matrix $H_\Gamma$ yields feedback $K=\hat K$ by construction (see the alternative proof of Theorem~\ref{thm:existence}).
From SDP~\eqref{eq:sdp_tuning_gamma} we obtain $H= H_\Gamma + H_P $,
or, equivalently,%
\begin{subequations}
	\label{eq:qrs_gamma}
	\begin{align}
		Q&=\hat K^\top \Gamma \hat K+P-A^\top P A,\\
		R&=\Gamma-B^\top P B,\\
		S&=\Gamma \hat K-B^\top P A.
	\end{align}
\end{subequations}
Note that, as proven in~\cite{Zanon2014d}, $P$ is the cost-to-go matrix associated with 
stage-cost matrix $H$. Therefore, the considerations made for Problem~\eqref{eq:sdp_tuning} regarding the condition number of $H$ and $P$ directly apply to Problem~\eqref{eq:sdp_tuning_gamma}.

This second formulation allows one to tune the behavior in case some constraint becomes active, as one can adjust the way the MPC control deviates from the prescribed controller. This fact will be illustrated by an example in Section~\ref{ex:constrained}. Since it might not be clear how to best select $\Gamma$, one can let the optimizer select it by solving 
\begin{subequations}%
	\label{eq:sdp_tuning_gamma_opt}%
	\begin{align}%
		\min_{\Gamma,P,\beta} \ \ & \beta \\
		\mathrm{s.t.} \ \ & \beta I \succeq H_\Gamma + H_P \succeq I, \qquad  \beta I \succeq P \succeq I,
	\end{align}
\end{subequations}
where variable $\alpha$ has been removed, since $\Gamma$ is now an optimization variable.
This third formulation might be desirable when there is no clear criterion on how to select $\Gamma$ and the only objective is to obtain a numerically well conditioned cost matrix. Note that the solution of Problem~\eqref{eq:sdp_tuning_gamma_opt} coincides with that of Problem~\eqref{eq:sdp_tuning}, since both problems are convex and minimize the same cost.

\begin{Remark}
	Though the three formulations~\eqref{eq:sdp_tuning},~\eqref{eq:sdp_tuning_gamma}, and~\eqref{eq:sdp_tuning_gamma_opt} are all equivalent (see the alternative proof of Theorem~\ref{thm:existence}), in practice~\eqref{eq:sdp_tuning_gamma}  and~\eqref{eq:sdp_tuning_gamma_opt} were always solved by all the SDP solvers we tested, i.e., SeDuMi~\cite{Sturm1999}, SDPT3~\cite{Tutunku2003}, and Mosek~\cite{MOSEK-hp}. Problem~\eqref{eq:sdp_tuning}, instead, was harder to solve and in some cases the solvers were unable to compute a solution. 
\end{Remark}

\begin{Remark}
	When solving MPC problems, one can either keep the QP in a sparse form or condense it. The Hessian of the sparse QP is block diagonal, with $N$ blocks equal to $H$ and the last block equal to $P$. Therefore, the condition number of the sparse QP Hessian is the condition number of $\mathrm{blockdiag}(H,P)$. 
	The condensed QP Hessian is instead dense, since the states are eliminated using the solution formula
		$x_k = A^k x_0 + \sum_{j=0}^{k} A^j B u_j$.
	Because matrices $A$, $B$ are fixed, the condensed Hessian is a linear function of $H$ and $P$. One can therefore in principle minimize the condition number of the condensed QP. Note, however, that the condensed QP Hessian might be ill-conditioned and, therefore, pose difficulties to the SDP solver.
\end{Remark}

\section{Tracking Problems and Input-Output Form}
\label{sec:tracking_io}

In this section, we show how the results of the previous sections can be adapted to solve output tracking problems, both for state-space and input-output models.

\vspace{0.5em}
\subsubsection*{Tracking in State-Space Form}
Let $y\in\rr^{n_y}$ be the output vector associated with system~\eqref{eq:system},
\begin{equation}
	y = C_yx + D_yu.
    \label{eq:system-y}
\end{equation}
In order to achieve perfect tracking, as also suggested in~\cite{Kwakernaak1972},
one can design a linear controller for the extended system
\begin{equation}
    \matrice{c}{x_{k+1}\\q_{k+1}}=\matrice{cc}{A & 0\\C_y & I}\matrice{c}{x_{k}\\q_{k}}+\matrice{c}{B\\D_y}u_k,
\label{eq:xq-model}
\end{equation}
where $q$ is the integral of the output $y$. 
Tracking of constant references and rejection of constant disturbances
is achieved by feeding back $q_{k+1}=q_{k}+(y_k-r_k)$ along with $x_k$ in the implementation.
Therefore, assuming that we are given the linear controller
\begin{equation}
    u=-\hat K\matrice{c}{x\\q},
\label{eq:xq-K}
\end{equation}
we can solve the inverse LQR problem for~\eqref{eq:xq-model},~\eqref{eq:xq-K}
as suggested in the previous section, which leads to also weighting the integral state $q_k$ in the stage
cost.

\vspace{0.5em}
\subsubsection*{Input-Output Form}
The controller matching problem in input-output form has been investigated in~\cite{Tran2015} where, however, no guarantee on the existence of a matching controller was proven. In the following, we prove that the above inverse LQR construction can be immediately extended to linear
input-output models and, therefore, all existence guarantees can be extended to the input-output setting.

We consider strictly causal input-output models of the form
\begin{equation}
    \mathcal{A}(z^{-1})y_k=\mathcal{B}(z^{-1})u_k,
    \label{eq:arx}
\end{equation}
where $z^{-1}$ is the backward-shift operator and
\begin{align*}
	\mathcal{A}(z^{-1})=I_{n_y}-\sum_{i=1}^{n_\mathcal{A}}\mathcal{A}_iz^{-i}, && \mathcal{B}(z^{-1})=\sum_{i=1}^{n_\mathcal{B}}\mathcal{B}_iz^{-i}.
\end{align*}
We are given the linear dynamic compensator
\begin{equation}
    \mathcal{\hat C}(z^{-1})u_k=\mathcal{\hat D}(z^{-1})y_k
    \label{eq:IO-controller}
\end{equation}
with
$
	\mathcal{\hat C}(z^{-1}) = I_{n_u}-\sum_{i=1}^{n_\mathcal{C}}\mathcal{\hat C}_i z^{-i}$, $ 
	\mathcal{\hat D}(z^{-1}) = \sum_{i=0}^{n_\mathcal{D}} \mathcal{\hat D}_i z^{-i},
$
and, without loss of generality, $n_\mathcal{C}\leq n_\mathcal{B}$, $n_\mathcal{D}\leq n_\mathcal{A}$. 

Assume that the linear dynamic compensator~\eqref{eq:IO-controller} asymptotically stabilizes~\eqref{eq:arx}. 
In this case the inverse LQR construction described in the previous section can be applied
to the nonminimal state-space realization with state vector
\begin{align}
	\label{eq:io_ss}
	x_k=\left ( y_k, \ \cdots, \ y_{k-n_\mathcal{A}+1}, \ u_{k-1}, \ \cdots, \ u_{k-n_\mathcal{B}+1}\right )\hspace{-2pt},
\end{align}
$x\in\rr^{n_y n_\mathcal{A}+n_u(n_\mathcal{B}-1)}$, by setting
\begin{align*}
	&A\hspace{-2pt}=\hspace{-2pt}\matrice{c|c|c}{\mathcal{A}_1\ \ldots \ \mathcal{A}_{n_\mathcal{A}-1}& \mathcal{A}_{n_\mathcal{A}} & B_2 \ldots\ \mathcal{B}_{n_\mathcal{B}}\\\hline
		I_{(n_\mathcal{A}-1){n_y}} &0& 0\\\hline
		0 & 0 & 0\\\hline
		0   & 0& I_{(n_\mathcal{B}-2)n_u} \ 0}\hspace{-3pt},
	&&B\hspace{-2pt}=\hspace{-2pt}\matrice{c}{\mathcal{B}_1\\\hline 0\\\hline I_{n_u}\\\hline0}\hspace{-3pt}, 
\end{align*}
where, depending on $n_B$, some blocks can have dimension $0$, 
and
\[
	\newcommand{\mysp}{\hspace{6pt}}
    \hat K=-\matrice{c@{\mysp}c@{\mysp}c@{\mysp}c@{\mysp}c@{\mysp}c@{\mysp}c@{\mysp}c@{\mysp}c@{\mysp}c@{\mysp}c@{\mysp}c}{\mathcal{\hat D}_0 & \ldots & \mathcal{\hat D}_{n_\mathcal{D}} & 0 & \ldots & 0 &
    \mathcal{\hat C}_1 & \ldots & \mathcal{\hat C}_{n_\mathcal{C}} & 0 & \ldots & 0}.
\]

The proposed controller matching procedure can then be applied by using the state-space description of the system, provided that $A,B$ are stabilizable and $\hat K$ does stabilize the system.

\vspace{0.5em}
\subsubsection*{Tracking in Input-Output Form}
\label{sec:io_tracking}
Set-point tracking problems can be solved in input-output form by defining the tracking error $e_k=y_k-r_k$ and the input increment $\Delta u_k=u_k-u_{k-1}$, for which the given control law is
\begin{equation}
    \hat C(z^{-1})\Delta u_k=\hat D(z^{-1})e_k.
    \label{eq:IO-controller2}
\end{equation}
In this case, model~\eqref{eq:arx} can be rewritten as
\begin{equation}
    (1-z^{-1})A(z^{-1})y_k=B(z^{-1})\Delta u_k.
    \label{eq:arix}
\end{equation}
For constant references $r_k\equiv r$, by letting $P(z^{-1}):=(1-z^{-1})A(z^{-1})$ we have 
that $P(z^{-1})r_k=0$, which subtracted from~\eqref{eq:arix} gives the tracking error model
\begin{equation}
    P(z^{-1})e_k=B(z^{-1})\Delta u_k.
    \label{eq:arix-err}
\end{equation}
The inverse LQR problem can be now synthesized for model~\eqref{eq:arix-err}
to match the controller~\eqref{eq:IO-controller2} as described above.
This provides a quadratic stage cost that involves $e_k$ and $\Delta u_k$.

\section{MPC Matching Problem}
\label{sec:mpc}
Let us now analyze the case in which linear constraints~\eqref{eq:constraints} 
must be enforced by the controller. This problem is naturally formulated in the Model Predictive Control (MPC) framework, based on solving the following optimal control problem
\begin{subequations}
	\label{eq:mpc}
	\begin{align}
		\min_{w} \ \ & V_\mathrm{f}(x_N) + \sum_{k=0}^{N-1} \ell(x_k,u_k) \label{eq:mpc-cost}\\
		\mathrm{s.t.} \ \ & x_0 = \hat x_0, \\
		& x_{k+1} = Ax_k+Bu_k, && k = 0,\ldots,N-1, \label{eq:mpc-system}\\
		& Cx_k + D u_k +e \leq 0, && k = 0,\ldots,N-1, \label{eq:mpc:pc}\\
		& x_N \in \mathcal{X}_{\mathrm{f}}, \label{eq:mpc:tc}
	\end{align}
\end{subequations}
where $w:=(w_0,\ldots,w_{N-1},x_N)$, $w_k:=(x_k,u_k)$, the stage cost $\ell$ is defined as in~\eqref{eq:LQR-pb}, the terminal cost $V_\mathrm{f}(x_N)$ is quadratic and must be suitably selected together with a corresponding terminal constraint set $\mathcal{X}_{\mathrm{f}}$ to yield recursive feasibility and asymptotic stability~\cite{Rawlings2017}.

Given the current state measurement $\hat x_0$, MPC solves Problem~\eqref{eq:mpc} and applies the first (optimal) control $u_0^\star$ to the system. At the next time step, problem~\eqref{eq:mpc} is solved again using new state measurement in order to close the loop.

Consider the set of states $\mathcal{X}_N:=\{ \, \hat x_0 \, | \, \mu_k^\star(\hat x_0) =0, \nu^\star(\hat x_0)=0 \, \}$, where $\mu_k^\star(\hat x_0)$, $\nu^\star(\hat x_0)$ are the optimal Lagrange multipliers associated with constraints~\eqref{eq:mpc:pc} and~\eqref{eq:mpc:tc}, respectively, when solving~\eqref{eq:mpc}. This is the set of states for which the MPC problem~\eqref{eq:mpc}
and the unconstrained MPC problem~\eqref{eq:mpc-cost}--\eqref{eq:mpc-system} coincide.
The following result is well known in the MPC literature, see, e.g.,~\cite{CM96,SR98,Bemporad2002b}.
\begin{Lemma}
	\label{lem:mpc_equiv_lqr}
	Assume that $\hat x_0 \in \mathcal{X}_N$, $\nabla^2 \ell = H \succ 0$, and $V_\mathrm{f}(x)=x^\top P x$, with $P\succ0$ the solution of the DARE associated with cost $\ell$ along with the corresponding
LQR gain $K$ as in~\eqref{eq:dare}. Then the MPC law~\eqref{eq:mpc} delivers $u_0^\star=-K\hat x_0$.
\end{Lemma}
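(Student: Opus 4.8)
The plan is to show that the hypothesis $\hat x_0\in\mathcal X_N$ lets us replace the constrained problem~\eqref{eq:mpc} by its unconstrained counterpart~\eqref{eq:mpc-cost}--\eqref{eq:mpc-system}, and then to solve the latter by a backward Riccati recursion that the DARE makes stationary at $P$.

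First I would exploit the very definition of $\mathcal X_N$: at $\hat x_0$ the optimal multipliers $\mu_k^\star$ and $\nu^\star$ vanish, so the KKT conditions of~\eqref{eq:mpc} reduce to the stationarity conditions of the problem obtained by dropping~\eqref{eq:mpc:pc}--\eqref{eq:mpc:tc}. Since $H=\nabla^2\ell\succ0$ and $P\succ0$, after eliminating the dynamics~\eqref{eq:mpc-system} the unconstrained problem is a strictly convex QP and hence has a unique minimizer $w^\star$; together with zero multipliers, $w^\star$ satisfies the KKT system of~\eqref{eq:mpc}, which is sufficient for optimality because~\eqref{eq:mpc} is itself convex. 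Thus $w^\star$ is the optimal solution of~\eqref{eq:mpc}, and it suffices to compute $u_0^\star$ for the unconstrained problem.

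Second I would apply dynamic programming to the unconstrained problem: set $V_N(x)=x^\top P x$ and $V_k(x)=\min_u\{\ell(x,u)+V_{k+1}(Ax+Bu)\}$. A completion-of-squares argument gives $V_k(x)=x^\top P_k x$ with $P_N=P$,
\begin{align*}
 P_k = {} & A^\top P_{k+1}A + Q \\
 & - (S^\top+A^\top P_{k+1}B)(R+B^\top P_{k+1}B)^{-1}(S+B^\top P_{k+1}A),
\end{align*}
and minimizing feedback $u_k^\star=-K_k x_k$ with $K_k=(R+B^\top P_{k+1}B)^{-1}(S+B^\top P_{k+1}A)$. I then claim $P_k=P$ and $K_k=K$ for all $k=0,\ldots,N-1$: the base case $k=N$ holds by construction; for the inductive step, if $P_{k+1}=P$ then $R+B^\top P B\succ0$ (since $H\succ0$ gives $R\succ0$ and $P\succ0$ gives $B^\top P B\succeq0$), so the recursion is well defined at step $k$, and because $(P,K)$ solves the DARE~\eqref{eq:dare} it returns exactly $P_k=P$, $K_k=K$. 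In particular $u_0^\star=-K_0\hat x_0=-K\hat x_0$, which is the assertion.

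The only genuinely delicate point is the reduction in the first step: one must argue carefully that $\hat x_0\in\mathcal X_N$ really does make the constrained minimizer coincide with the unconstrained one, which rests on strict convexity (hence uniqueness) from $H\succ0$, $P\succ0$ and on sufficiency of KKT for the convex QP~\eqref{eq:mpc}. The Riccati part is the standard finite-horizon LQ computation, and the induction that closes it is immediate once the DARE fixed-point property of $(P,K)$ is invoked.
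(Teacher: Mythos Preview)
Your proposal is correct and follows essentially the same approach as the paper: first argue that $\hat x_0\in\mathcal X_N$ lets you drop the inequality and terminal constraints (via KKT and strict convexity), then solve the resulting unconstrained problem by the backward Riccati recursion and observe that the DARE makes $P_k\equiv P$, $K_k\equiv K$. Your write-up is in fact slightly more explicit than the paper's (you spell out the induction and the well-posedness check $R+B^\top PB\succ0$), but the structure and ideas are the same.
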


A set $\mathcal{X}$ is positive invariant for system~\eqref{eq:system} under feedback $u=-\hat Kx$ if $(A-B\hat K)x\in\mathcal{X}$ and $(C-D\hat K)x+e\leq0$, for all $x\in \mathcal{X}$. The maximal positive invariant (MPI) set is the largest positive invariant set, containing all positive invariant sets.
\begin{Lemma}
	If $\mathcal{X}_{\mathrm{f}}$ is selected as the MPI set for the LQR feedback gain $\hat K$, then $\mathcal{X}_N=\mathcal{X}_{\mathrm{f}}$.
\end{Lemma}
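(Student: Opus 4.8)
The plan is to prove the two inclusions $\mathcal{X}_{\mathrm{f}}\subseteq\mathcal{X}_N$ and $\mathcal{X}_N\subseteq\mathcal{X}_{\mathrm{f}}$ separately, after first recasting $\mathcal{X}_N$ in terms of closed-loop trajectories. Recall that, since $\mathcal{X}_{\mathrm{f}}$ is the MPI set for the LQR gain $\hat K$, we are implicitly using the matched cost, so that the LQR gain associated with $\ell$ is $K=\hat K$ (Theorem~\ref{thm:existence}) and, by the argument in the proof of Lemma~\ref{lem:mpc_equiv_lqr}, the solution of the \emph{unconstrained} problem~\eqref{eq:mpc-cost}--\eqref{eq:mpc-system} from $\hat x_0$ is exactly the autonomous trajectory $x_0=\hat x_0$, $x_{k+1}=(A-B\hat K)x_k$, $u_k=-\hat Kx_k$. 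The first step is then to show that $\hat x_0\in\mathcal{X}_N$ if and only if this trajectory is feasible for the full MPC problem~\eqref{eq:mpc}, i.e.\ $(C-D\hat K)x_k+e\le 0$ for $k=0,\dots,N-1$ and $x_N\in\mathcal{X}_{\mathrm{f}}$. The ``only if'' direction is immediate; for the ``if'' direction, feasibility of the unconstrained optimum within the (smaller) constrained feasible set, combined with strict convexity of the cost (guaranteed here by $H\succ 0$ and $P\succ 0$), makes it the unique constrained optimum as well, and the unconstrained KKT system is completed into a KKT system for~\eqref{eq:mpc} by appending $\mu_k^\star=0$, $\nu^\star=0$, which is dual feasible and trivially satisfies complementary slackness; hence the optimal multipliers vanish and $\hat x_0\in\mathcal{X}_N$.

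With this characterization, $\mathcal{X}_{\mathrm{f}}\subseteq\mathcal{X}_N$ is direct. If $\hat x_0$ belongs to the MPI set for $\hat K$, positive invariance gives $x_k=(A-B\hat K)^k\hat x_0\in\mathcal{X}_{\mathrm{f}}$ for all $k\ge 0$; admissibility of the MPI set then yields $(C-D\hat K)x_k+e\le 0$ for every $k$, and in particular $x_N\in\mathcal{X}_{\mathrm{f}}$. Thus the LQR trajectory is feasible for~\eqref{eq:mpc}, so $\hat x_0\in\mathcal{X}_N$.

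For the reverse inclusion, the plan is to show that $\mathcal{X}_N$ is itself a positive invariant admissible set for the feedback $u=-\hat Kx$, and then conclude by maximality of the MPI set. Admissibility, $(C-D\hat K)x+e\le 0$ for $x\in\mathcal{X}_N$, is the $k=0$ instance of the feasibility characterization. For positive invariance, fix $\hat x_0\in\mathcal{X}_N$ and set $x_1:=(A-B\hat K)\hat x_0$: the closed-loop trajectory starting from $x_1$ is the shifted sequence $x_1,\dots,x_N,x_{N+1}$, where the constraints at $x_1,\dots,x_{N-1}$ are inherited from $\hat x_0\in\mathcal{X}_N$, the constraint at $x_N$ holds because $x_N\in\mathcal{X}_{\mathrm{f}}$ and $\mathcal{X}_{\mathrm{f}}$ is admissible, and $x_{N+1}=(A-B\hat K)x_N\in\mathcal{X}_{\mathrm{f}}$ because $x_N\in\mathcal{X}_{\mathrm{f}}$ and $\mathcal{X}_{\mathrm{f}}$ is positive invariant; hence $x_1\in\mathcal{X}_N$. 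Therefore $\mathcal{X}_N$ is a positive invariant admissible set, and since it is contained in the largest such set, $\mathcal{X}_N\subseteq\mathcal{X}_{\mathrm{f}}$. Combining the two inclusions gives $\mathcal{X}_N=\mathcal{X}_{\mathrm{f}}$.

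The step I expect to be most delicate is the trajectory characterization of $\mathcal{X}_N$, specifically the claim that the optimal multipliers of the constrained problem can be taken to be zero whenever the unconstrained optimum is feasible; one must be slightly careful here about possible non-uniqueness of multipliers when a constraint qualification fails at the (weakly active) constraints, although for the purpose of the definition of $\mathcal{X}_N$ exhibiting one zero multiplier vector suffices. Everything after that step is bookkeeping exploiting the shift-invariance of autonomous linear trajectories and the maximality property of the MPI set.
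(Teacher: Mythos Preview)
Your proposal is correct and follows the same two-inclusion strategy as the paper, using the feasibility and optimality of the LQR trajectory for $\mathcal{X}_{\mathrm{f}}\subseteq\mathcal{X}_N$ and the maximality of the MPI set for the reverse inclusion. The paper's own proof is considerably terser: for $\mathcal{X}_N\subseteq\mathcal{X}_{\mathrm{f}}$ it simply asserts that $\mathcal{X}_{\mathrm{f}}$ is the largest set in which the closed loop $A-B\hat K$ satisfies the path constraints and concludes immediately, whereas you explicitly (i) characterize $\mathcal{X}_N$ via feasibility of the unconstrained optimum (the step you rightly flag as the delicate one, handled via convexity and zero multipliers) and (ii) prove positive invariance of $\mathcal{X}_N$ via the standard shift argument before invoking maximality. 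Your version thus fills in details the paper leaves implicit, but the underlying argument is the same.
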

\begin{proof}
	By assumption, $\mathcal{X}_{\mathrm{f}}$ is the largest set in which the autonomous system with transition matrix $(A-B\hat K)$ does not violate the path constraints~\eqref{eq:constraints}. Therefore, $\mathcal{X}_{\mathrm{f}}\supseteq \mathcal{X}_N$. Moreover, $\forall \, \hat x_0\in \mathcal{X}_{\mathrm{f}}$ the closed-loop dynamics $u_k=-\hat Kx_k$, $x_{k+1} = Ax_k+Bu_k$, $x_0=\hat x_0$ satisfy $Cx_k+D u_k +e \leq 0$; i.e., $x_k$, $u_k$ are a feasible initial guess for~\eqref{eq:mpc}. Since $\hat K$ is the optimal LQR feedback matrix associated with the stage cost, the guess is also optimal and 
	$\mathcal{X}_{\mathrm{f}}=\mathcal{X}_N$.
\end{proof}

The previous results cover the case in which no constraint is active. With the following Lemma we prove that whenever some constraint is active, the resulting feedback minimizes the deviation from the matched controller.
\begin{Lemma}
	\label{lem:mpc_min_diff}
	Assume that $\ell(x,u)$ is formulated as in~\eqref{eq:stage_cost} and $V_\mathrm{f}(x)=x^\top P x$ where $P$ is the solution to the DARE~\eqref{eq:dare} with $K=\hat K$. Then 
	MPC minimizes the cost
	\begin{align*}
	\sum_{k=0}^{N-1} (u_k+\hat Kx_k)^\top \Gamma (u_k+\hat Kx_k),
	\end{align*}
	with $\Gamma = R+B^\top PB \succ 0.$
\end{Lemma}
\begin{proof}
	The proof follows from Equation~\eqref{eq:qrs_gamma}, which implies 
	\begin{align*}
	&\sum_{k=0}^{N-1} \matr{c}{x_k \\ u_k}^\top H \matr{c}{x_k \\ u_k} + x_N^\top P x_N \\
	&\hspace{6em}= \hat x_0^\top P \hat x_0 + \sum_{k=0}^{N-1} (u_k+\hat Kx_k)^\top \Gamma (u_k+\hat Kx_k).
	\end{align*}
	Since $\hat x_0$ is fixed, the term $\hat x_0^\top P \hat x_0$ is constant and does not influence the optimal solution.
\end{proof}
Note that Lemma~\ref{lem:mpc_min_diff} contains Lemma~\ref{lem:mpc_equiv_lqr} as a special case, since it states that the proposed controller matching procedure guarantees that MPC delivers $u=-\hat K x$ whenever possible, i.e., whenever no constraint becomes active.

We remark that, for $H\succ0$, $P\succ 0$, MPC asymptotically stabilizes system $(A,B)$ to the origin~\cite{Borrelli2017,Rawlings2017,Grune2011}. 
Note that the size of the region of attraction---and feasible domain---of MPC does not decrease with an increasing prediction horizon $N$. In practice one observes that increasing a short prediction horizon $N$ typically leads to a significant increase of the region of attraction.

Finally, the proposed controller matching procedure can easily be coupled with more advanced MPC formulations, e.g., tube-based robust MPC~\cite{Chisci2001,Mayne2005}, which asymptotically stabilizes the closed-loop system to the minimum robust positive invariant set~\cite{Mayne2005}.

The MPC matching procedure is summarized as follows:
\begin{enumerate}
	\item compute $H$ by solving the matching problem~\eqref{eq:sdp_tuning},~\eqref{eq:sdp_tuning_gamma} or~\eqref{eq:sdp_tuning_gamma_opt};
	\item select $P$ as the LQR cost-to-go, obtained automatically in 1); 
	\item compute the terminal set $\mathcal{X}_\mathrm{f}$ as the MPI set for feedback $K$. 
\end{enumerate}

\vspace{0.5em}
\subsubsection*{Nonlinear MPC}
We consider now the case of a nonlinear system 
\begin{align*}
	x_{k+1} = f(x_k,u_k).
\end{align*}
One can linearize the system around a steady state $x_\mathrm{s},u_\mathrm{s}$ to obtain
\begin{align*}
	A = \nabla_x f(x_\mathrm{s},u_\mathrm{s})^\top, && B = \nabla_u f(x_\mathrm{s},u_\mathrm{s})^\top, \\
	C = \nabla_x h(x_\mathrm{s},u_\mathrm{s})^\top, && D = \nabla_u h(x_\mathrm{s},u_\mathrm{s})^\top, && e=h(x_\mathrm{s},u_\mathrm{s}),
\end{align*}
and use a linear controller to locally stabilize the nonlinear system. Then, the controller matching strategy can be deployed as described before to define a matching linear MPC problem.

In case one is interested in further improving performance by using a nonlinear model within MPC, Nonlinear MPC (NMPC) can be formulated as follows~\cite{Rawlings2017,Grune2011}
\begin{subequations}
	\label{eq:nmpc}
	\begin{align}
	\min_{w} \ \ & V_\mathrm{f}(x_N) + \sum_{k=0}^{N-1} \ell(x_k,u_k) \label{eq:nmpc-cost}\\
	\mathrm{s.t.} \ \ & x_0 = \hat x_0, \\
	& x_{k+1} = f(x_k,u_k), && k = 0,\ldots,N-1, \label{eq:nmpc-system}\\
	& h(x_k,u_k) \leq 0, && k = 0,\ldots,N-1, \label{eq:nmpc:pc}\\
	& x_N \in \mathcal{X}_{\mathrm{f}}. \label{eq:nmpc:tc}
	\end{align}
\end{subequations}

\begin{Lemma}
	Assume that the stage cost is selected as the solution to the controller matching problem~\eqref{eq:sdp_tuning}, \eqref{eq:sdp_tuning_gamma}, or~\eqref{eq:sdp_tuning_gamma_opt} for the system linearization computed at $x_\mathrm{s},u_\mathrm{s}$. Assume further that $h(x_\mathrm{s},u_\mathrm{s})<0$ and the terminal cost is selected as $V_\mathrm{f}(x)=x^\top P x$, with $P\succ0$ the solution of the DARE associated with cost $\ell$ and the system linearized at $x_\mathrm{s},u_\mathrm{s}$. Then, the NMPC feedback $u^*_0(\hat x)$ satisfies 
	\begin{align*}
		\| u^*_0(\hat x) + \bar K \hat x \| = O(\| \hat x - x_\mathrm{s} \|^2).
	\end{align*}
\end{Lemma}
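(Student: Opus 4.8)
The plan is to reduce the constrained NMPC problem to its unconstrained version in a neighbourhood of the steady state, and then to show that the unconstrained NMPC feedback law is differentiable at $x_\mathrm{s}$ with derivative $-\hat K$ --- the matched gain, written $\bar K$ in the statement --- so that a Taylor expansion yields the claimed $O(\|\hat x-x_\mathrm{s}\|^2)$ bound. Throughout I would shift coordinates so that $x_\mathrm{s}=0$, $u_\mathrm{s}=0$; then $f(0,0)=0$, $e=h(0,0)<0$, the linearization is $(A,B)$, the stage cost is $\ell(x,u)=\smallmat{x\\u}^\top H\smallmat{x\\u}$ with $H\succ0$, and $V_\mathrm{f}(x)=x^\top Px$ with $P\succ0$ the DARE solution whose LQR gain is $\hat K$ by Theorem~\ref{thm:existence} and Lemma~\ref{lem:mpc_equiv_lqr}.

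First I would show that constraints~\eqref{eq:nmpc:pc} and~\eqref{eq:nmpc:tc} are inactive for $\hat x$ near the origin. Since $e<0$ strictly and the feedback $-\hat Kx$ vanishes at the origin, the origin lies in the interior of $\mathcal{X}_\mathrm{f}$ (taken, as prescribed by the matching procedure, as the maximal positive invariant set for the linearization under $\hat K$, which contains a neighbourhood of the origin because the constraints hold strictly there). At $\hat x=0$ the unique optimizer of~\eqref{eq:nmpc} is the steady trajectory $x_k\equiv0$, $u_k\equiv0$: it is feasible and attains the global minimum $0$ of the cost, which is positive for any other feasible trajectory since $H\succ0$, $P\succ0$. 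By continuity of the solution map --- established in the next step --- for all $\hat x$ in a neighbourhood of the origin the optimal trajectory stays where $h(x_k,u_k)<0$ and $x_N\in\operatorname{int}\mathcal{X}_\mathrm{f}$, so~\eqref{eq:nmpc} coincides there with the unconstrained problem~\eqref{eq:nmpc-cost}--\eqref{eq:nmpc-system}.

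Next I would set up the first-order optimality (KKT) system of the unconstrained NMPC and apply the implicit function theorem. The dynamic equality constraints satisfy LICQ, and at $\hat x=0$ all dynamic multipliers vanish (the backward costate recursion started from $\nabla V_\mathrm{f}(0)=0$), so the Lagrangian Hessian there equals the objective Hessian, which is block-diagonal with positive-definite blocks built from $R$, $H$ and $P$; the second-order sufficient conditions therefore hold and persist under perturbation. Assuming $f,h$ sufficiently smooth, the optimality system is a regular zero of a smooth map at $\hat x=0$, so there is a locally continuously differentiable solution branch $\hat x\mapsto w^\star(\hat x)$ (with its multipliers) emanating from the steady trajectory, and in particular $u_0^\star(\cdot)$ is differentiable at $0$. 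To identify its derivative I would differentiate the KKT system at $\hat x=0$: since $A,B$ are the Jacobians of $f$ at the origin, the resulting linear system is exactly the finite-horizon LQR optimality system for dynamics $(A,B)$, stage cost $H$, terminal cost $P$, whose solution is the backward Riccati recursion of Lemma~\ref{lem:mpc_equiv_lqr}; there, because $P_N=P$ solves the DARE, $P_k\equiv P$ and $K_k\equiv\hat K$ at every stage, so $\left.\partial u_0^\star/\partial\hat x\right|_{\hat x=0}=-\hat K$. A Taylor expansion then gives $u_0^\star(\hat x)=-\hat K\hat x+O(\|\hat x\|^2)$, i.e., in the original coordinates $\|u_0^\star(\hat x)+\hat K\hat x\|=O(\|\hat x-x_\mathrm{s}\|^2)$.

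The main obstacle is this last identification step --- turning the informal assertion ``NMPC agrees with the LQR of the linearization to first order'' into the precise statement that the NMPC feedback map is differentiable at $x_\mathrm{s}$ with the correct derivative. This is tractable only thanks to the two preceding reductions: no inequality is active, so the optimality system carries only equalities, and $H,P\succ0$ make its Jacobian nonsingular; the algebraic crux is then recognizing the linearized optimality system as the LQR one, so that Lemma~\ref{lem:mpc_equiv_lqr} pins the gain to $\hat K$. One must also assume enough regularity of $f$ and $h$ for the remainder to be genuinely $O(\cdot^2)$ rather than only $o(\cdot)$: $C^2$ suffices for differentiability of the solution branch, but a touch more (locally Lipschitz second derivatives, or $C^3$) is needed for the quadratic bound.
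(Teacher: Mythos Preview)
Your proposal is correct and follows essentially the same approach as the paper. The paper's proof is very terse: it invokes parametric NLP sensitivity results from the literature to assert that the NMPC feedback and the linear MPC feedback have the same derivative at $x_\mathrm{s}$, and then defers the details to an external reference. Your argument is precisely the content of such a sensitivity result spelled out explicitly---verifying LICQ and second-order sufficient conditions at the steady trajectory, applying the implicit function theorem to the KKT system, and identifying the linearized KKT system with the finite-horizon LQR optimality conditions so that Lemma~\ref{lem:mpc_equiv_lqr} pins the gain to $\hat K$.
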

\begin{proof}
	By relying on the results derived in~\cite{Guddat1990,Diehl2001} we note that, by construction, the feedback control law $u^\mathrm{NMPC}(\hat x_0)$ yielded by the NMPC formulation~\eqref{eq:nmpc} and the one yielded by the linear MPC formulation~\eqref{eq:mpc}, i.e., $u^\mathrm{MPC}(\hat x_0)$, satisfy
	\begin{align*}
		\nabla_{\hat x_0} u^\mathrm{NMPC}(x_\mathrm{s}) = \nabla_{\hat x_0} u^\mathrm{MPC}(x_\mathrm{s}).
	\end{align*}
	A more detailed proof can be found in~\cite[Appendix B]{Zanon2016b}.
\end{proof}

\section{Observers and Moving Horizon Estimation}
\label{sec:observer}

In this section, we discuss how the proposed controller matching procedure can be applied to the state estimation problem. This allows one to interpret any linear observer as a Kalman filter and to formulate Moving Horizon Estimation (MHE) which locally behaves like the linear observer, and handles constraints and nonlinear dynamics. 

Note that, while MHE is often formulated using the Kalman filter for tuning, MHE observers can be tuned using other criteria, e.g., $\mathcal{H}_\infty$~\cite{Tirado2018},

where, due to the computational complexity of solving a minimax problem, the problem is solved only approximately. We need to stress that with our tuning procedure the $\mathcal{H}_\infty$-tuned MHE problem can be solved exactly and efficiently for linear systems, since one needs to solve a convex QP instead of a minimax problem.

Consider the following linear system 
\begin{align*}
	x_+ &= Ax + w, & y &= C_yx + v,
\end{align*}
where $w$ and $v$ denote process and measurement noise, respectively.
We write the one-step-ahead estimation problem at time $n$ as
	\begin{align}
		x_-^*,x_+^* = \arg\min_{x_-,x_+} \ & \matr{c}{Ax_--x_+ \\ C_yx_--y}^{\hspace{-1pt}\top} \hspace{-1pt} H^{-1} \matr{c}{Ax_--x_+ \\ C_yx_--y} \nonumber \\ 
		&+ (x_--\hat x)^\top P^{-1}(x_--\hat x), \label{eq:kalman_generalized}
	\end{align}
where the estimation error covariance is $P=\E[(x-\hat x)(x-\hat x)^\top]$ and the measurement and process noise covariance is
\begin{align*}
	H=\matr{ll}{Q & S^\top \\ S &R}=\E\left [\matr{c}{w \\ v}\matr{c}{w \\ v}^\top\right ],
\end{align*}
where in Kalman filtering one often assumes $S=0$. The optimal state estimate is then $\hat x_+=x_+^*$. Note that we used a compact notation for the state estimates, which are usually denoted as $\hat x = x_{n|n-1}$; $x_-^*=x_{n|n}$; and $\hat x_+=x_{n+1|n}$ to explicitly state which information they use to predict the state at which time.

The Kalman filter covariance update is given by the DARE~\cite{Deshpande2017}
\begin{subequations}
	\label{eq:kalman_covariance}
	\begin{align}
		P_+ &= APA^\top + Q - L(S^\top + C_yPA^\top), \\
		L &= (S+APC_y^\top)(R+C_yPC_y^\top)^{-1}, 
	\end{align}
\end{subequations}
where $P_+=P$ at steady state. The Kalman filter state estimate is
\begin{align}
	\label{eq:kalman_estimate}
	\hat x_+ = A \hat x - L(C_y\hat x-y).
\end{align}

\begin{Lemma}
	The estimation problem~\eqref{eq:kalman_generalized} coincides with a Kalman filtering problem and delivers state estimate~\eqref{eq:kalman_estimate} with feedback gain and covariance update given by~\eqref{eq:kalman_covariance}.
\end{Lemma}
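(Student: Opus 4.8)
The statement asserts that the generalized estimation problem~\eqref{eq:kalman_generalized} is exactly a Kalman filter, i.e.\ that its solution is the affine map~\eqref{eq:kalman_estimate} with the gain $L$ and covariance recursion~\eqref{eq:kalman_covariance}. The natural approach is to exploit the well-known duality between the LQR and the Kalman filter: the covariance DARE~\eqref{eq:kalman_covariance} is the transpose of the control DARE~\eqref{eq:dare} under the substitutions $A\mapsto A^\top$, $B\mapsto C^\top$, $Q\mapsto Q$, $R\mapsto R$, $S\mapsto S^\top$, $P\mapsto P$, $K\mapsto L^\top$. So the plan is: first solve the quadratic program~\eqref{eq:kalman_generalized} directly by setting the gradient with respect to $(x_-,x_+)$ to zero, then identify the resulting estimator with the dual of the LQR solution.

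\textbf{Key steps.} First I would write out the first-order optimality conditions of~\eqref{eq:kalman_generalized}. Partitioning $H^{-1}$ conformally with the stacked residual $(Cx_--y,\ Ax_--x_+)$, the stationarity condition in $x_+$ couples $x_+$ linearly to $x_-$, so $x_+$ can be eliminated, leaving a quadratic in $x_-$ alone whose minimizer $x_-^*$ is an affine function of $\hat x$ and $y$. Substituting back gives $\hat x_+ = x_+^*$ as an affine function of $\hat x$ and $y$; matching coefficients shows it has exactly the form $A\hat x - L(C\hat x - y)$ for a gain $L$ expressible through $A$, $C$, $P$, and the blocks of $H$. Second, I would verify that this $L$ coincides with~\eqref{eq:kalman_covariance} and that the propagated error covariance of $x_+^*$ obeys the stated recursion; the cleanest route is to invoke the LQR/DARE results already established in the paper (Lemma~\ref{lem:mpc_equiv_lqr} and the DARE~\eqref{eq:dare}) applied to the dual system $(A^\top, C^\top)$ with the transposed cost matrices, so that the uniqueness of the stabilizing DARE solution does the bookkeeping for me rather than a direct matrix manipulation. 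Detectability of $(A,C)$ (inherited from the assumption that the linear observer is stabilizing) guarantees existence of the relevant stabilizing solution.

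\textbf{Main obstacle.} The delicate part is the algebra of inverting $H$ and carrying the cross term $S$ through the elimination of $x_+$ and the completion-of-the-square in $x_-$: one must check that the Schur complements are well defined (i.e.\ $H\succ0$ and $CPC^\top + R\succ0$) and that the expressions collapse precisely to $L = (APC^\top + S)(CPC^\top + R)^{-1}$ and to the $P_+$ recursion, rather than to some equivalent-looking but unsimplified form. Using the duality with the control DARE is what keeps this from becoming a long direct computation: once the optimality conditions are seen to be the transpose of the LQR Riccati recursion for $(A^\top,C^\top)$ with cost $H$ and terminal weight $P$, the identification with~\eqref{eq:kalman_covariance}--\eqref{eq:kalman_estimate} is immediate from the uniqueness of the stabilizing solution already cited in the paper. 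I would therefore present the proof as: (i) write the KKT system, (ii) observe it is the dual LQR recursion, (iii) conclude by the established LQR results.
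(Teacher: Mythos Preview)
Your plan and the paper agree on the opening move---write the first-order conditions of~\eqref{eq:kalman_generalized}, eliminate $x_+$, and solve for $x_-^*$---but then diverge. The paper completes the argument by direct computation throughout: it partitions $H^{-1}$ into blocks $\tilde Q,\tilde R,\tilde S$, uses the matrix-inversion lemma (specifically $\tilde Q^{-1}\tilde S^\top=-S^\top R^{-1}$ and $\tilde R-\tilde S\tilde Q^{-1}\tilde S^\top=R^{-1}$) to simplify $x_-^*$ to $\hat x - PC^\top(R+CPC^\top)^{-1}(C\hat x - y)$, substitutes back to obtain $\hat x_+ = A\hat x - L(C\hat x - y)$ with $L$ exactly as in~\eqref{eq:kalman_covariance}, and then derives the covariance recursion by forming the estimation error $e_+=(A-LC)e-w-Lv$ and computing $P_+=\E[e_+e_+^\top]$ from the noise covariances $Q,R,S$. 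The LQR/Kalman duality you want to lean on is mentioned in the paper only \emph{after} the proof, as a consequence, not as a proof device.

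Your duality shortcut has a genuine gap in the covariance step. The duality you cite is the purely algebraic fact that the two DAREs are transposes of each other under $(A,B)\mapsto(A^\top,C^\top)$; it does not by itself establish that the \emph{error covariance} of the estimator you just constructed satisfies the Kalman DARE. For that you must either compute $\E[e_+e_+^\top]$ directly---which is exactly what the paper does, and it is a three-line calculation once $L$ is in hand---or supply a separate information-filter argument showing that the optimal cost of~\eqref{eq:kalman_generalized} as a function of $x_+$ has Hessian $P_+^{-1}$ with $P_+$ obeying the recursion, and then identify this with the error covariance. Neither of these is delivered by Lemma~\ref{lem:mpc_equiv_lqr}, which concerns MPC coinciding with LQR under inactive constraints and is not the relevant tool here. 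Also note that the estimation QP is weighted by $H^{-1}$ and $P^{-1}$, not $H$ and $P$, so the claim that its optimality conditions are ``the transpose of the LQR Riccati recursion with cost $H$ and terminal weight $P$'' is not immediate and would itself require the block-inverse algebra you are trying to avoid. In short: your route to the gain formula is sound and parallels the paper; for the covariance update, duality does not replace the computation.
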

\begin{proof}
We define
\begin{align*}
	\matr{ll}{\tilde Q & \tilde S^\top \\ \tilde S &\tilde R} = \matr{ll}{Q & S^\top \\ S &R}^{-1} = H^{-1}.
\end{align*}
The matrix inversion lemma and the Schur complement yield 
\begin{align}
	\label{eq:matrix_inverse_formulae}
	\tilde Q^{-1} \tilde S^\top = -S^\top R^{-1}, && \tilde R- \tilde S\tilde Q^{-1} \tilde S^\top = R^{-1}.
\end{align}
Then, the first-order necessary conditions for optimality of the problem in~\eqref{eq:kalman_generalized} read
\begin{align*}
	0&=-\tilde S^\top (C_yx_-^*-y) - \tilde Q (Ax_-^*-x_+^*), \\
	0&=P^{-1}(x_-^*-\hat x) +C_y^\top \tilde R (C_yx_-^*-y) + C_y^\top \tilde S (Ax_-^*-x_+^*) \\
	&\hspace{7em}+ A^\top \tilde S^\top (C_yx_-^*-y) +A^\top \tilde Q (Ax_-^*-x_+^*).
\end{align*}
From the first condition we get
\begin{align*}
	x_+^* = Ax_-^* + \tilde Q^{-1}\tilde S^\top (C_yx_-^*-y).
\end{align*}
By inserting this in the second condition and using~\eqref{eq:matrix_inverse_formulae}, we have
$
	(P^{-1} + C_y^\top R^{-1}C_y)x_-^* = P^{-1}\hat x + C_y^\top R^{-1} y.
$
By using the matrix inversion lemma 
one can derive
\begin{align*}
	x_-^* = \hat x - PC_y^\top (R+C_y P C_y^\top)^{-1}(C_y\hat x-y).
\end{align*}
Then, we can conclude that
\begin{align}
	\hat x_+=x_+^* &= Ax_-^* - S^\top R^{-1} (C_yx_-^*-y) 
	= A \hat x - L(C_y\hat x-y), \nonumber \\
	L&=(S^{\top} + APC_y^\top )(R+C_yPC_y^\top)^{-1}.\label{eq:kalman_gain}
\end{align}
Let us denote the estimation error as $e=\hat x - x$, which entails
\begin{align*}
	e_+&=Ae - L(C_y\hat x-y)-w 
	= (A-LC_y)e -w + Lv,
\end{align*}
and remind that $\E[ee^\top]=P$, $\E[ew^\top]=0$, $\E[ev^\top]=0$, $\E[ww^\top]=Q$, $\E[vv^\top]=R$, $\E[wv^\top]=S.$

The covariance $P_+:=\mathbb{C}[e_+]$ of the estimation error update is:
\begin{align*}
P_+ 
&= (A-LC_y)P(A-LC_y)^\top + LRL^\top + Q - SL^\top - L S^\top\\
&= APA^\top + Q - L(C_yPA^\top +S^\top),
\end{align*}
where we used~\eqref{eq:kalman_gain} to derive the last equality and obtain~\eqref{eq:kalman_covariance}.
\end{proof}

We proved that the one-step-ahead estimation problem~\eqref{eq:kalman_generalized} coincides with a Kalman filter whose feedback and covariance propagation are given by DARE~\eqref{eq:kalman_covariance}. Note that~\eqref{eq:kalman_covariance} coincides with~\eqref{eq:dare}, if $(A,B)$ is replaced by $(A^\top,C_y^\top)$, such that $L=K^\top$. Therefore, the controller matching procedure also applies to linear observers and can be used to formulate MHE problems which match any desired linear observer yielding asymptotically stable state-estimation errors. 

\section{Practical Implementation}
\label{sec:simulations}

In this section we demonstrate the theory with some examples and show how the matching technique can be implemented in practice. 

\vspace{0.5em}
\subsubsection*{Tuning Matrix $\Gamma$}
	\label{ex:constrained}
	We illustrate how different choices of $\Gamma$ can influence the optimal solution in the presence of active constraints. We remark that, by construction, whenever no constraint is active any $\Gamma \succ 0$ delivers the same feedback.
	Consider the discrete-time linear system defined by
	\begin{align*}
	A = -0.8,  && B = \tightMatr{ccc}{0.1 & 0.1 & 0.1},
	\end{align*}
subject to the constraint $x\leq 0.7$.
Consider the stabilizing gain
\[
    \hat K = \tightMatr{ccc}{0.5 & 0.5 & 0.2}^\top.
\]
We want to synthesize an MPC controller with prediction horizon $N=1$, terminal LQR cost, terminal constraint set $\mathcal{X}_f=\{x|x\leq0.7\}$,
and such that the MPC law coincides with $\hat K$ when constraints are inactive in the
MPC problem.
Consider the two weighting matrices $\Gamma_1 = I$ and $\Gamma_2 = \mathrm{diag}(\tightMatr{ccc}{1 & 100 & 1})$. Moreover, consider the tuning matrix obtained by solving the direct formulation~\eqref{eq:sdp_tuning}: 
	\begin{align*}
	H = \matr{rrrr}{
		1.3128 &   0.6917  &  0.7088 &   0.4775   \\
		0.6917 &   1.1610  & -0.1849  &  0.1173    \\
		0.7088 &  -0.1849 &   1.2435 &  -0.0036 \\
		0.4775  &  0.1173  & -0.0036  &  1.2021   \\
	}.
	\end{align*}
	For $\hat x_0=-1$, we have $(A-B\hat K)\hat x_0=0.92>0.7$: MPC deviates from the desired controller to satisfy the constraint. We obtain the following controls (the subscript denotes the used weighting matrix):
	\begin{equation*}
	u_{\Gamma_1} = \tightMatr{r}{-0.2333 \\ -0.2333 \\ -0.5333 }, \ \, u_{\Gamma_2} = \tightMatr{r}{-0.5945 \\ \phantom{-}0.4891 \\ -0.8945 }, \ \, u_H = \tightMatr{r}{ -0.2849 \\ -0.2923 \\ -0.4228}.
	\end{equation*}
	The tuning role of matrix $\Gamma$ is best understood by considering the cost in form~\eqref{eq:indef_cost}: $\Gamma$ does not penalize the use of the controls themselves, but rather their deviation from $-\hat K \hat x_0$: 
	\begin{align*}
	|u_{\Gamma_1}+\hat K\hat x_0| = \tightMatr{r}{0.7333 \\ 0.7333 \\ 0.7333 }, \qquad |u_{\Gamma_2}+\hat K\hat x_0| = \tightMatr{r}{1.0945 \\ 0.0109 \\ 1.0945 },
	\end{align*}
	i.e., for $\Gamma_2$, the second control is closer to its reference value $0.5$ than for $\Gamma_1$, but larger in magnitude. Since tuning matrix $H$ is obtained through the direct formulation~\eqref{eq:sdp_tuning}, no choice can be made on how the controls are selected in the presence of active constraints. 

\vspace{0.5em}
\subsubsection*{PID and Input-Output Form}
	\label{ex:pid}
	
	\begin{figure}
		\begin{center}
			\includegraphics[width=\linewidth]{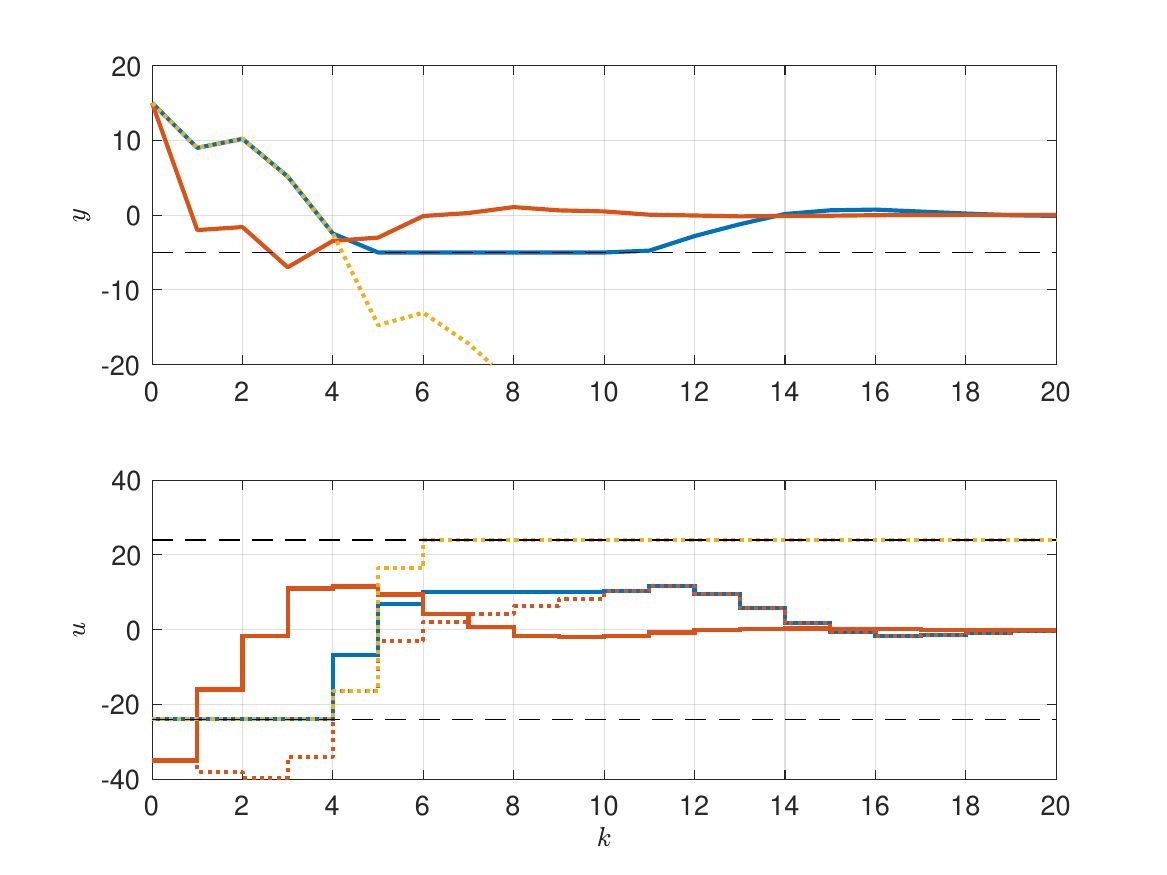}
			\caption{Closed-loop simulations for the PID example. Top plot: MPC (blue line), PID (red line), PID with saturated control (yellow dotted line). Bottom plot $u=-\hat K x$, with $x$ form the MPC closed-loop trajectory (dotted red line). Constraints are displayed in dashed black line.}
			\label{fig:pid}
		\end{center}
	\end{figure}

	Consider the linear system in input-output form from~\cite{DiCairano2010}
	\begin{align*}
		y_k = 1.8y_{k-1} + 1.2y_{k-2} + u_{k-1},
	\end{align*}
	with sampling time $t_\mathrm{s}=2$ and subject to constraints
		$-24\leq u\leq 24$, and $ y \geq -5$.
	We want to match the PID controller
	\begin{align*}
		u^\mathrm{PID}_k &= -\left ( K_\mathrm{i} \, y^\mathrm{i}_k + K_\mathrm{p} \, y_k + \frac{K_\mathrm{d}}{t_\mathrm{s}}\left (y_k-y_{k-1}\right )  \right ), \\
		y^\mathrm{i}_k &= y^\mathrm{i}_{k-1} + t_\mathrm{s}y_{k},
	\end{align*}
	with $K_\mathrm{i}=0.248$, $K_\mathrm{p}=0.752$, $K_\mathrm{d}=2.237$. We write the system dynamics in the state-space representation $x_{k+1}=Ax_k+Bu_k$ where
	\begin{align*}
		x_k \hspace{-1pt}=\hspace{-1pt} \matr{c}{y_{k-1} \\ y_{k-2} \\ y^\mathrm{i}_{k-1} \\ u_{k-1}}\hspace{-1pt}, && 
		A \hspace{-1pt}=\hspace{-1pt} \matr{cccc}{1.8 & 1.2 & 0 & 1 \\ 1 & 0 & 0 & 0 \\ 3.6 & 2.4 & 1 & 2 \\ 0  & 0 & 0 & 0}\hspace{-1pt}, &&
		B \hspace{-1pt}=\hspace{-1pt} \matr{c}{0 \\ 0 \\ 0 \\ 1}\hspace{-1pt}.
	\end{align*}
	Then, the PID becomes $u=-\hat Kx$, with 
	\begin{align*}
		K_\mathrm{pid}&=K_\mathrm{p} + K_\mathrm{i}t_\mathrm{s} + K_\mathrm{d}/t_\mathrm{s}, \\
		\hat K&=\matr{cccc}{ K_\mathrm{d}/t_\mathrm{s} + 1.8 K_\mathrm{pid} & 1.2 K_\mathrm{pid} & K_\mathrm{i} & K_\mathrm{pid} }\\
		&=\matr{cccc}{5.3782  &  2.8398  &  0.2480  &  2.3665}.
	\end{align*}
	In~\cite{DiCairano2010} an LQR with dense $Q$ provided an exact match. We are able to reproduce the same result by either adding the constraint that $S=0$ or minimizing, e.g., $\|S\|_1$. By minimizing the condition number of $H$, we obtain $\kappa(H)\approx 1.7$, as opposed to $\kappa(H)\approx 6.6$ found by~\cite{DiCairano2010}. In this case, there is no clear advantage in minimizing the condition number, since $6.6$ is so low that it does not cause numerical issues. If we minimize the condition number of $\boldsymbol{H}=\mathrm{blkdiag}(H,P)$, we obtain $\kappa(\boldsymbol{H})\approx 158.8$ with $S=0$ and $\kappa(\boldsymbol{H})\approx 149.2$ with $S$ free.
	
	We plot the control and output closed-loop trajectories in Figure~\ref{fig:pid}. MPC respects the constraints and, as soon as the output enters the region in which no output nor input constraints would be active under the feedback $\hat K$, MPC and PID deliver the same control (blue and dotted red lines). The PID controller violates both the input and output constraints (red line). By saturating the PID input to satisfy the input constraint, the output is not stabilized (dotted yellow line).

	Consider now the desired feedback law
	\begin{align*}
		\hat K&=\matr{cccc}{4  &  2  &  0.15  &  1.6}.
	\end{align*}
	In this case, with $S=0$ there exists no LQR matching the feedback $\hat K$, though it is stabilizing. By allowing $S\neq0$, one is able to compute $H\succ0$ such that $K=\hat K$. The condition number is $\kappa(H)\approx30.5$. 

\vspace{0.5em}
\subsubsection*{From Continuous to Discrete Time and Anti Wind-Up}
	\label{ex:data_driven_cstr}

	Consider the nonlinear continuously-stirred tank reactor (CSTR) with dynamics~\cite{Seborg2003}
		\begin{align*}
			\dot T &= \frac{q}{V}(T_\mathrm{f}-T) + \frac{H_{\mathrm{AB}}}{\rho C_\mathrm{p}} K_0 e^{\frac{E}{RT}}C_\mathrm{A} + \frac{UA}{V \rho  C_\mathrm{p} }(T_\mathrm{c}-T), \\
			\dot C_\mathrm{A} &= \frac{q}{V}(C_\mathrm{Af}-C_\mathrm{A}) - K_0 e^{\frac{E}{RT}}C_\mathrm{A}, 
		\end{align*}
	with state $x=(T,C_\mathrm{A})$ (temperature and concentration of $A$); control $u=T_\mathrm{c}$ (temperature of the cooling jacket); and output $y=T$. The parameters are: volume $V=100 \ \mathrm{m}^3$, density of the A-B mixture $\rho = 1000 \ \mathrm{kg}/\mathrm{m}^3$, reaction heat $H_\mathrm{AB} = 5\cdot 10^4 \ \mathrm{J}/\mathrm{mol}$, activation energy over the universal gas constant $E/R = 8750 \ \mathrm{J}/\mathrm{mol \ K}$, time constant $K_0=7.2\cdot 10^{10} \ 1/\mathrm{s}$ and the heat transfer coefficient times the area $UA=5\cdot 10^4 \ \mathrm{W}/\mathrm{K}$. The system is subject to the uncontrolled volumetric flowrate $q = 1 \pm 0.1 \ \mathrm{m}^3/\mathrm{s}$, feed concentration $C_\mathrm{Af}=1 \pm 0.1 \ \mathrm{mol}/\mathrm{m}^3$, and feed temperature $T_\mathrm{f} = 350 \pm 10 \ \mathrm{K}$.
	
	The system is already controlled by a PI controller with proportional gain $K_\mathrm{p}=0.5$, integral gain $K_\mathrm{i}=5$ and an anti-windup gain $K_\mathrm{aw}=1$ such that the integral term is given by
	\begin{align*}
		\dot I_e = e + K_\mathrm{aw} \min(\max(K_\mathrm{p}e + K_\mathrm{i}I_e, \, u_\mathrm{lb}), \,u_\mathrm{ub}),
	\end{align*}
	with $e=T_\mathrm{ref}-T$; $T_\mathrm{ref}$ the reference setpoint and $u_\mathrm{lb} = 250 \ \mathrm{K}$, $u_\mathrm{ub} = 350 \ \mathrm{K}$ the saturation bounds on the control signal.
	
	Consider 
	the setpoint $x_\mathrm{s}=(300,0.39,59.72)$ $u_\mathrm{s}=298.59$ with output reference $r_\mathrm{s}=300$. We write the system dynamics in closed-loop with the PI controller as $\dot x = f^\mathrm{c}_\mathrm{PI}(x,r)$ and linearize them at $x_\mathrm{s}$, $r_\mathrm{s}$ to obtain the continuous- and discrete-time matrices
	\begin{align*}
		A^\mathrm{c}_\mathrm{PI} = \frac{\partial f^\mathrm{c}_\mathrm{PI}}{\partial x}, && A_\mathrm{PI}&=e^{A^\mathrm{c}_\mathrm{PI} t_\mathrm{s}},\\
		B^\mathrm{c}_{r,\mathrm{PI}} = \frac{\partial f^\mathrm{c}_\mathrm{PI}}{\partial r}, && B_{r,\mathrm{PI}} &= \int_0^{t_\mathrm{s}} e^{A^\mathrm{c}_\mathrm{PI} t} B^\mathrm{c}_{r,\mathrm{PI}} \mathrm{d}t,
	\end{align*}
	for a sampling time $t_\mathrm{s}$. 
	
	We apply the same procedure to the open-loop dynamics $\dot x = f^\mathrm{c}(x,u)$, linearized at $x_\mathrm{s}$, $u_\mathrm{s}$ to get
	\begin{align*}
		A^\mathrm{c} = \frac{\partial f^\mathrm{c}}{\partial x}, && B^\mathrm{c} = \frac{\partial f^\mathrm{c}}{\partial u}, && B^\mathrm{c}_r = \frac{\partial f^\mathrm{c}}{\partial r},
	\end{align*}
	and the corresponding discrete-time linearized system
	$
		\Delta x_{k+1} = A\Delta x_k + B\Delta u_k + B_r \Delta r_k.
	$
	From the the continuous-time PI feedback, we compute the corresponding discrete-time feedback matrix $\hat K$ by pole placement, i.e., by imposing:
	$
		\mathrm{eig}(A-B\hat K) = \mathrm{eig}(A_\mathrm{PI}).
	$.
	
	We compute the reference for MPC as the steady-state $\Delta x^\mathrm{r}$, $\Delta u^\mathrm{r}$ associated with a given $\Delta r$:
	\begin{align*}
		\Delta x^\mathrm{r} &= (A_\mathrm{PI})^{-1}B_{r,\mathrm{PI}}\Delta r, \\
		\Delta u^\mathrm{r} &= \arg\min_{\Delta u} \ \| A\Delta x^\mathrm{r} +B\Delta u + B_\mathrm{r}\Delta r \|,
	\end{align*}
	where, by construction, $A \Delta x^\mathrm{r} +B\Delta u^\mathrm{r} + B_\mathrm{r}\Delta r=0$.
	
	Since the integral state is not a state of the system but a state of the controller, its time propagation is given by the MPC prediction. We introduce an anti-windup mechanism by adding the term 
	\begin{equation*}
	f_\mathrm{aw}(\Delta x,\Delta u) := t_\mathrm{s}K_\mathrm{aw} \left (\Delta u + u_\mathrm{s} - \Delta u^\mathrm{r} - \bar K (\Delta x + x_\mathrm{s}-\Delta x^\mathrm{r}) \right )
	\end{equation*}
    to the dynamics of the integral state, where $K_\mathrm{aw}$ is the PI anti-windup gain. 
The MPC formulation then reads
\begin{align*}
	\min_{\Delta x,\Delta u} \ \ &\sum_{k=0}^{N-1} \matr{c}{\Delta x_k - \Delta x^\mathrm{r}_k\\ \Delta u_k - \Delta u^\mathrm{r}_k}^\top H \matr{c}{\Delta x_k - \Delta x^\mathrm{r}_k\\ \Delta u_k - \Delta u^\mathrm{r}_k} \\
	&\hspace{6em}+ (\Delta x_N - \Delta x^\mathrm{r}_N)^\top P (\Delta x_N - \Delta x^\mathrm{r}_N) \\
	\mathrm{s.t.} \ \ & \Delta x_0 = \hat x - x_\mathrm{s},\\
	& \Delta x_{k+1} \hspace{-1pt}=\hspace{-1pt} A \Delta x_k \hspace{-1pt}+\hspace{-1pt} B \Delta u_k \hspace{-1pt}+\hspace{-1pt} B_\mathrm{r} \Delta r_k \hspace{-1pt}+\hspace{-1pt} \tightMatr{c}{0\\0\\f_\mathrm{aw}(\Delta x,\Delta u)},\\
	& C\Delta x_k + D \Delta u_k \leq e.
\end{align*}

A nonlinear MPC can also be formulated, where the dynamics are
\begin{align*}
	\Delta x_{k+1} = f( \Delta x_k , \Delta u_k,\Delta r_k) + \tightMatr{c}{0\\0\\f_\mathrm{aw}(\Delta x,\Delta u)}.
\end{align*}

We compare in simulations the PI controller with the linear MPC controller on a reference step change. Additionally, we introduce a constraint on the maximum temperature and simulate both the MPC (MPCx) and NMPC controllers. The results are displayed in Figure~\ref{fig:lin_mpc_nmpc}: the PI controller violates the temperature constraint; linear MPC is also violating it due to the linearization error which causes an inaccurate prediction; NMPC does satisfy this constraint and stabilizes the system to the desired output. If the temperature constraint is removed, NMPC has a smaller overshoot for the considered step of $30$ degrees. For a step of $10$ degrees all controllers are qualitatively the same.

\begin{figure}
	\begin{center}
		\includegraphics[width=\linewidth]{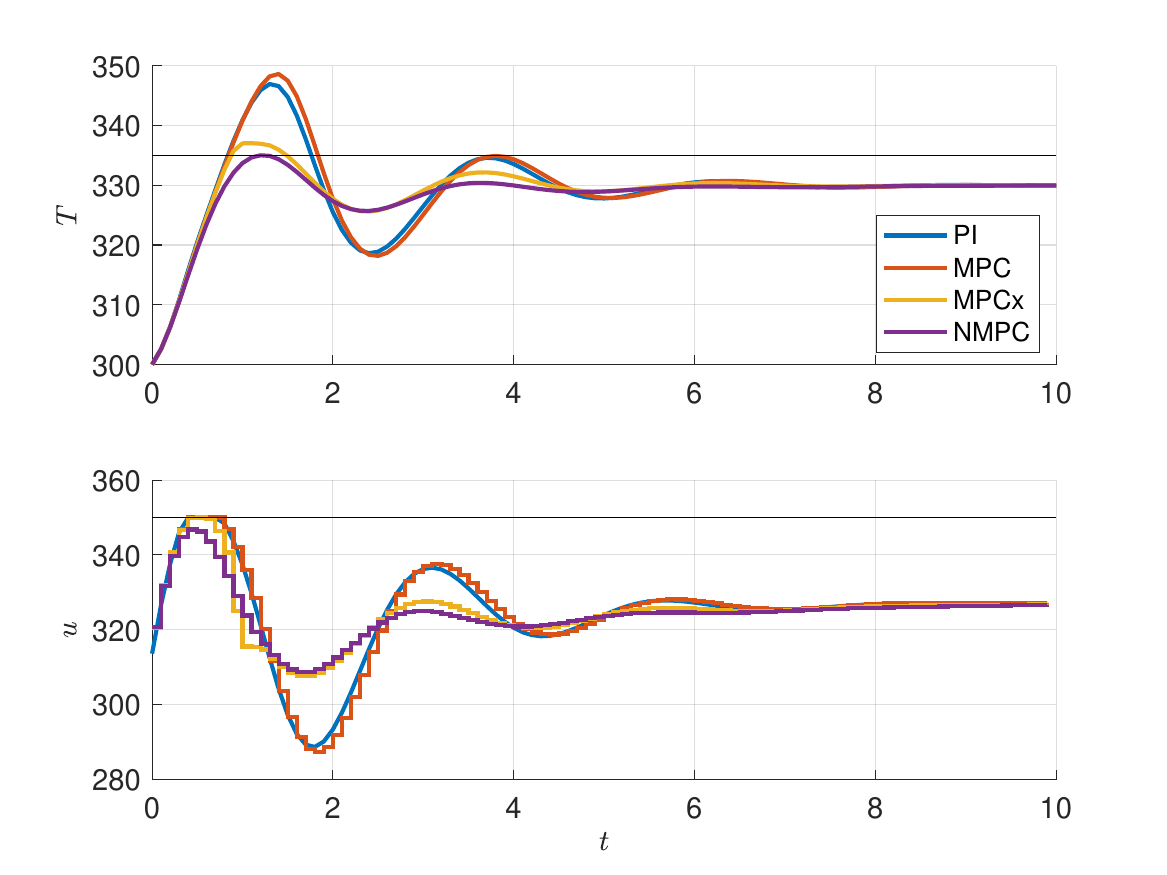}
		\caption{Temperature and input profiles for a closed-loop simulation using the PI controller (PI), the MPC controller (MPC), the MPC controller with a constraint on the maximum temperature (MPCx) and nonlinear MPC with the same constraint (NMPC).}
		\label{fig:lin_mpc_nmpc}
	\end{center}
\end{figure}

\vspace{0.5em}
\subsubsection*{$\mathcal{H}_\infty$ Moving Horizon Estimation}

With the following example, we detail how a robust MHE can be formulated, based on classical results for robust linear observers. Given the full equivalence with control problems, we remark that this also applies to robust tuning of MPC controllers.

Consider the system defined by 
\begin{align*}
	x_+ &= \left ( A + \matr{cc}{0 & 0 \\ -2x_2& 0.1x_2} \right )x+Bw, \\ y&=C_yx+v,\\
	A &= \matr{rr}{0.93& 0.09 \\ -0.61& 0.92}, && \hspace{-5em} B = \matr{ll}{0.01& 0.01\\
		0.003& 0.12}, \\
	C_y &= \matr{cc}{1 & 0},
\end{align*}
with process noise covariance $W=\mathrm{diag}(\matr{cc}{10 & 10})$ and measurement noise covariance $V=0.01$. By neglecting the nonlinear term, one can design both a Kalman filter and an $\mathcal{H}_\infty$ observer, which solves~\cite{Simon2006a}:
\begin{align*}
	\Sigma &= ( I - PG^\top G + P C_y^\top V^{-1}C_y )^{-1}P, \\ 
	L &= A\Sigma C_y^\top V^{-1}, \\
	P &= A\Sigma A^\top + BWB^\top,
\end{align*}
where we select tuning parameter $G=\gamma \, \mathrm{diag}(\matr{cc}{0.1 & 1})$, and $\gamma$ is a scalar to be maximized. 

For our example, the $\mathcal{H}_\infty$ observer is obtained for $\gamma\approx1.3438$. The two observers yield feedback 
\begin{align*}
L^\mathrm{Kalman}=\matr{c}{0.6866 \\ 1.5202}, && 	L^{\mathcal{H}_\infty}=\matr{c}{1.4391 \\ 4.5947}.
\end{align*}
The tuning procedure yields 
\begin{align*}
H_{\mathcal{H}_\infty}^{-1} = \matr{rrr}{     0.9451  & -0.2260 &  -0.0239 \\ -0.2260 & 0.0693 & -0.0985 \\ -0.0239 & -0.0985 & 0.9896},
\end{align*}
as weighting matrix for the $\mathcal{H}_\infty$-tuned MHE; while the Kalman MHE formulation uses the inverse of the noise covariance, i.e., $H_{\mathrm{Kalman}}^{-1} = \mathrm{diag}(\matr{ccc}{0.1&0.1&100})$.

We assume that we have knowledge about the fact that $w\geq0$. We include this information by using the proposed tuning procedure to design a cost for the Kalman filter such that it yields the $\mathcal{H}_\infty$ observer and then use the obtained cost within a linear (MHE) framework.  

The simulation results are displayed in Figure~\ref{fig:h_inf}, where one can see that the two MHE perform similarly. The root mean square (RMS) error obtained with the $\mathcal{H}_\infty$-tuned MHE is $176.9$, while for a standard MHE formulation we obtain an RMS error of $214.9$. Note that the $\mathcal{H}_\infty$ and Kalman filter have an RMS error of $208.4$ and $215.5$ respectively.

\begin{figure}
	\begin{center}
		\includegraphics[width=\linewidth]{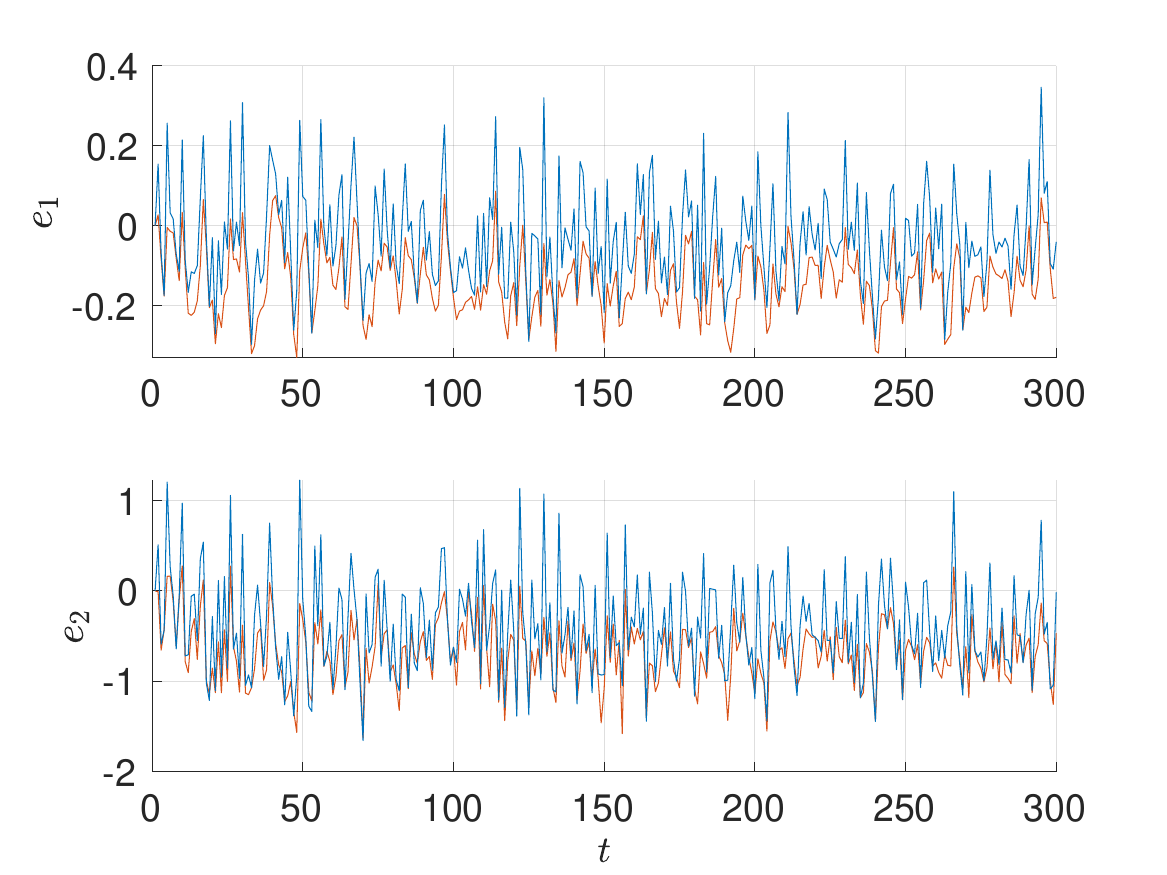}
		\caption{State estimation error: tuned MHE (blue), standard MHE (red).}
		\label{fig:h_inf}
	\end{center}
\end{figure}

\section{Conclusions}
\label{sec:conclusions}

This paper analyzed how to design a LQR/MPC cost function that results in a prescribed linear control law. We have proven that the problem can be solved exactly for all stabilizing linear controllers, both in state-space and input-output form, and provided three approaches to compute the desired cost by solving a convex SDP. The results also extend to linear observers, which can be matched by a Kalman filter or MHE. 

\bibliographystyle{IEEEtran}
\bibliography{bibliography}

\end{document}